\newcommand{\uppertype}[2]{[#1]^+_{#2}}
\newcommand{\lowertype}[2]{[#1]^-_{#2}}
\newcommand{\fractype}[2]{[#1]_{#2}}
\newcommand{\sifo}{\ensuremath{\textup{Succ-inv }}\FO}
\newcommand{\lsifo}{\ensuremath{\textup{LinSucc-inv }}\FO}
\newcommand{\MC}{\textsc{MC}}
\newcommand{\FO}{\textsc{FO}\xspace}
\newcommand{\LFP}{\textsc{LFP}\xspace}
\newcommand{\MSO}{\textsc{MSO}\xspace}
\newcommand{\oifo}{\ensuremath{<\!\textup{-inv }}\FO}
\newcommand{\sieq}[1]{\equiv^{{\sifo}}_{#1}}
\newcommand{\foeq}[1]{\equiv^{\FO}_{#1}}
\newcommand{\N}{\mathbb{N}}
\newcommand{\dist}[3]{\text{dist}_{#1}(#2,#3)}
\newcommand{\tp}[3]{\text{tp}_{#2}^{#3}(#1)}
\newcommand{\ntp}[1]{[\![#1]\!]}
\newcommand{\nocc}[2]{|#1|_{#2}}
\newcommand{\im}{\operatorname{Im}}
\newcommand{\neigh}[3]{\mathcal N_{#2}^{#3}(#1)}
\newcommand{\lay}[1]{\texttt{(Layer[$#1$])}\xspace}
\newcommand{\cone}[1]{C_{#1}}
\newcommand{\threq}[4]{\ntp{#1}_{#3}=^{#4}\ntp{#2}_{#3}}
\def\eg{{\em e.g.}}
\def\ie{{\em i.e.}\xspace}
\title{Successor-Invariant First-Order Logic on Classes of Bounded Degree}
\author{Julien Grange}
\address{ENS Paris, PSL, INRIA, CNRS}
\email{julien.grange@inria.fr}
\begin{document}

\begin{abstract}
  We study the expressive power of successor-invariant first-order logic, which is an extension of first-order logic where the usage of an additional successor relation on the structure is allowed, as long as the validity of formulas is independent of the choice of a particular successor on finite structures.

  We show that when the degree is bounded, successor-invariant first-order logic is no more expressive than first-order logic.
\end{abstract}

\maketitle

\section{Introduction}

First-order logic, \FO, is the standard formalism to express properties of finite structures. Its expressive power is well known, and very restrained, as it can only express properties that are local, which roughly means that it can only talk about the immediate surroundings of a small number of elements, and it is unable to count.

A number of logics with higher expressivity can be defined with \FO as a building block, such as \MSO, in which quantification over sets is allowed, and \LFP, which adds a fixpoint operator to \FO. These additions break the local character of the logic. 

Another way to define logics from \FO is through the addition, in an invariant way, of arithmetic predicates on the structure that are exterior to the vocabulary. This amounts to arbitrarily identifying the universe of the structure with an initial segment of the integers, and allowing some arithmetic on them. However, we want these extensions to define properties of the structures, and not to depend on a particular ordering on their elements: thus we focus on invariant extensions of \FO.

If the only predicate allowed is the order, we get order-invariant first-order logic, \oifo. Restricting a bit the additional relation, we get successor-invariant first-order, \sifo. In this formalism, we only grant access to the successor relation derived from the order, provided that the evaluation of a sentence using this successor relation is independent of the choice of a particular successor.

The problem of determining whether an \FO-sentence using an order or a successor relation is invariant wrt. this relation is undecidable, by reduction from the finite satisfiability problem (cf. Trakhtenbrot's theorem \cite{trakhtenbrot1950impossibility}). In fact, it is shown in~\cite{DBLP:journals/jsyml/BenediktS09} that this problem in undecidable even on strings. Hence we use here the term ``logic'' somewhat liberally, since having a recursively enumerable syntax is a usual requirement for a logic. 

The study of these two formalisms finds its motivation, among other topics such as descriptive complexity, in database theory. As databases are commonly stored on disk that implicitly order their memory segments, when one wishes to express a query in \FO, one has access to an additional order on the elements of the database. However, making use of this order without care could result in queries that evaluate differently on two implementations of the same database, which is clearly an undesirable behavior. We want to use this order only in an invariant fashion; this way, the result of a query depends only on the database it is run on, and not on the way the data is stored on disk. This amounts exactly to the definition of \oifo, or \sifo if we restrict the way this order can be accessed.

It is straightforward that \oifo is at least as expressive as \sifo, which in turn can express any \FO-definable property. Gurevich constructed a class of finite structures that can be defined by an \oifo sentence, but which is not \FO-definable. Though this construction was not published by Gurevich, it can be found, \eg, in Section 5.2 of \cite{DBLP:books/sp/Libkin04}. Rossman extended this result, and proved in \cite{DBLP:journals/jsyml/Rossman07} that on finite structures, \sifo is strictly more expressive than \FO.

Grohe and Schwentick \cite{DBLP:journals/tocl/GroheS00} proved that these logics were Gaifman-local, giving an upper bound to their expressive power. Other upper bounds were given by Benedikt and Segoufin \cite{DBLP:journals/jsyml/BenediktS09}, who proved that \oifo, and hence \sifo, are included in \MSO on classes of bounded treewidth and on classes of bounded degree. Elberfeld, Frickenschmidt and Grohe \cite{DBLP:conf/lics/ElberfeldFG16} extended the first inclusion to a broader setting, that of decomposable structures. Whether these logics are included in \MSO in general is still an open question.

 The classes of structures involved in the separating examples by Gurevich and Rossman are dense, and no other example is known on classes that are sparse. Far from it, \oifo and \emph{a fortiori} \sifo are known to collapse to \FO on several sparse classes. Benedikt and Segoufin \cite{DBLP:journals/jsyml/BenediktS09} proved the collapse on trees; Eickmeyer, Elberfeld and Harwarth \cite{DBLP:conf/mfcs/EickmeyerEH14} obtained an analogous result on graphs of bounded tree-depth; Grange and Segoufin \cite{grange_et_al:LIPIcs:2020:11666} proved the collapse on hollow trees.

Whether \oifo or \sifo collapse to \FO on classes of graphs of bounded treewidth (or even bounded pathwidth) are still open questions. We go in another direction in this paper, and prove that \sifo collapses to \FO on classes of structures of bounded degree. To do this, we show how to construct successors on two \FO-similar structures of bounded degree, such that the two structures remain \FO-similar when considering the additional successor relation.

\paragraph{\textbf{Related work:}} The general method used in \cite{DBLP:conf/mfcs/EickmeyerEH14} to prove that \oifo collapses to \FO when the tree-depth is bounded is the same as ours: starting from two \FO-similar structures, they show how to construct orders that maintain the similarity. However, the techniques we use to construct our successors are nothing like the ones used in \cite{DBLP:conf/mfcs/EickmeyerEH14}, as the assumptions on the classes at hand (bounded tree-depth versus bounded degree) are very different.

Instead of directly constructing similar orders on two similar structures, \cite{DBLP:journals/jsyml/BenediktS09} and \cite{grange_et_al:LIPIcs:2020:11666} exhibit a chain of intermediate structures and intermediate orders that are pairwise similar, in order to prove the collapse on trees and hollow trees. Although all these constructions, as well as ours, rely on a careful manipulation of the neighborhoods, our construction differs widely from these ones. Indeed, instead of chaining local modifications of the structures, we construct all at once our successor relations, without intermediate steps.

The classes of graphs on which the model checking problem for \FO (denoted \MC(\FO)) is fixed-parameter tractable has been widely studied. It has originally been proven by Seese~\cite{DBLP:journals/mscs/Seese96} that \MC(\FO) is fixed-parameter linear on any class of bounded degree. After a series of improvements on this result, Grohe, Kreutzer and Siebertz~\cite{DBLP:journals/jacm/GroheKS17} showed that this problem is fixed-parameter tractable on any nowhere dense class of graphs.

Concerning the model checking problem for \sifo, Van den Heuvel, Kreutzer, Pilipczuk, Quiroz, Rabinovich and Siebertz~\cite{DBLP:conf/lics/HeuvelKPQRS17} proved that \MC(\sifo) is fixed-parameter tractable on any class of bounded expansion (which is less general than the nowhere dense setting, but also includes any class of bounded degree). Since there is no indication that \sifo is more expressive than \FO on classes of bounded expansion, this could possibly be due to a collapse of \sifo to \FO on those classes. Our result showing that \sifo collapses to \FO on classes of bounded degree, together with the aforementioned result from \cite{DBLP:journals/mscs/Seese96}, gives an alternative proof of the fact that \MC(\sifo) is non-uniform fixed-parameter linear when the degree is bounded.

\section{Preliminaries}
\label{sec:prelim}

The remainder in the division of $n\in\N$ by $m>0$ is denoted $n[m]$.

A binary relation on a finite set $X$ is a \textbf{successor relation on $X$} if it is the graph of a circular permutation of $X$, \ie a bijective function from $X$ to $X$ with a single orbit. This differs from the standard notion of successor in that there is neither minimal nor maximal element. However, this does not have any impact on our result, as discussed at the end of the present section.

We use the standard definition of first-order logic $\FO(\Sigma)$ over a signature $\Sigma$ composed of relation and constant symbols. We only consider finite $\Sigma$-structures, which are denoted by calligraphic upper-case letters, while their universes are denoted by the corresponding standard upper-case letters; for instance, $A$ is the universe of the structure $\mathcal A$. 

\begin{defi}[\sifo]
  A sentence $\varphi\in\FO(\Sigma\cup\{S\})$, where $S$ is a binary
  relation symbol, is said to be \textbf{successor-invariant} if for every
  $\Sigma$-structure $\mathcal A$, and every successor relations $S_1$ and
  $S_2$ on $A$, $(\mathcal A,S_1)\models\varphi$ iff
  $(\mathcal A,S_2)\models\varphi$. We can then omit the interpretation for
  the symbol $S$, and if $(\mathcal A,S_1)\models\varphi$ for any (every)
  successor $S_1$, we write $\mathcal A\models\varphi$. 

  The set of  successor-invariant sentences on $\Sigma$ is denoted $\sifo(\Sigma)$.
\end{defi}

\begin{defi}[$\mathcal L$-similarity]
  Given two $\Sigma$-structures $\mathcal A$ and $\mathcal B$, and $\mathcal L$ being either $\FO(\Sigma)$ or $\sifo(\Sigma)$, we write $\mathcal A\equiv^{\mathcal L}_k\mathcal B$, and say that that $\mathcal A$ and $\mathcal B$ are \textbf{$\mathcal L$-similar at depth $k$}, if $\mathcal A$ and $\mathcal B$ satisfy the same $\mathcal L$-sentences of quantifier rank at most $k$. For $\FO(\Sigma)$ as well as for $\sifo(\Sigma)$, we omit $\Sigma$ when it is clear from the context.
\end{defi}

We write $\mathcal A\simeq\mathcal B$ if $\mathcal A$ and $\mathcal B$ are isomorphic.

\begin{defi}[Gaifman graph]
  The Gaifman graph $\mathcal G_{\mathcal A}$ of a $\Sigma$-structure
  $\mathcal A$ is defined as $(A,V)$ where $(x,y)\in V$ iff $x$ and $y$ are distinct and
  appear in the same tuple of a relation of $\mathcal A$. In particular, if
  a graph is seen as a relational structure on the vocabulary $\{E\}$, its
  Gaifman graph is the unoriented version of itself. By
  $\dist{\mathcal A}{x}{y}$, we denote the distance between $x$ and $y$ in
  $\mathcal G_{\mathcal A}$. The \textbf{degree} of $\mathcal A$ is the
  degree of its Gaifman graph, and a class $\mathcal C$ of
  $\Sigma$-structures is said to be of \textbf{bounded degree} if there
  exists some $d\in\N$ such that the degree of every
  $\mathcal A\in\mathcal C$ is at most $d$.
\end{defi}

\begin{defi}[Neighborhood types]
  \label{def:types}
  Let $c$ be a constant symbol that does not appear in~$\Sigma$. 
  
  For $k\in\N$ and $x\in A$, the \textbf{$k$-neighborhood}
  $\neigh{x}{\mathcal A}{k}$ of $x$ is the $(\Sigma\cup\{c\})$-structure
  whose $\Sigma$-restriction is the substructure of $\mathcal A$ induced by
  $\{y\in A:\dist{\mathcal A}{x}{y}\leq k\}$, and where $c$ is interpreted
  as $x$. 

  The \textbf{$k$-neighborhood type} $\tau=\tp{x}{\mathcal A}{k}$ is the
  isomorphism class of its $k$-neighborhood. We say that $\tau$ is a neighborhood type
  over $\Sigma$, and that $x$ is an \textbf{occurrence} of
  $\tau$. We denote by $\nocc{\mathcal A}{\tau}$ the number of occurrences of
  $\tau$ in $\mathcal A$, and we write
  $\threq{\mathcal A}{\mathcal B}{k}{t}$ to mean that for every $k$-neighborhood type
  $\tau$, $\nocc{\mathcal A}{\tau}$ and $\nocc{\mathcal B}{\tau}$ are
  either equal, or both larger than $t$.
\end{defi}

\begin{defi}[Path and cycles]
  A \textbf{cycle} of length $l\geq 3$ in the $\Sigma\cup\{S\}$-structure
  $\mathcal A$ is a sequence $(x_0,\dots,x_{l-1})$ of distinct vertices of
  $A$ such that for every $0\leq i< l$, $x_i$ and $x_{i+1[l]}$ appear in
  the same tuple of some relation of $\mathcal A$ (in other words, it is a
  cycle in $\mathcal G_{\mathcal A}$). If furthermore
  $(x_i,x_{i+1[l]})\in S$ for every $i$, then we say that it is an
  $S$-cycle. If for some $i$, $(x_i,x_{i+1[l]})\in S$ or
  $(x_{i+1[l]},x_i)\in S$, then we say that the cycle goes through an
  $S$-edge. A \textbf{path} is defined similarly, without the requirement
  on $x_{l-1}$ and $x_0$, and its length is $l-1$ instead of $l$.
\end{defi}

From now on, we assume that $\Sigma$ is purely relational (\ie contains only relation symbols) and does not contain the binary symbol $S$.

\medskip

For a class $\mathcal C$ of $\Sigma$-structures, we say that \[\sifo=\FO\text{ on }\mathcal C\] if the properties of $\mathcal C$ definable in $\sifo$ and in $\FO$ are the same. In other words, if for every $\varphi\in\sifo$, there exists some $\bar\varphi\in\FO$ such that \[\forall\mathcal A\in\mathcal C,\quad\mathcal A\models\varphi\quad\text{iff}\quad\mathcal A\models\bar\varphi\,.\] The reverse inclusion, \ie $\FO\subseteq\sifo$, always holds and needs no verification.

We are now ready to state our main result:

\begin{restatable}{thm}{collapse}
  \label{th:collapse}
  For every vocabulary $\Sigma$ and for every class $\mathcal C$ of $\Sigma$-structures of bounded degree, 
  \[\sifo=\FO\text{ on }\mathcal C\,.\]
\end{restatable}

The proof of Theorem~\ref{th:collapse} is given in Section~\ref{sec:proof}, and constitutes the core of this paper. We give here a sketch of this proof; this will motivate the definitions given in Section~\ref{sec:fractal}.

\begin{proof}[Proof overview]
  Our goal is, given two structures $\mathcal G_1$ and $\mathcal G_2$ of degree at most $d$ that are \FO-similar (that is, such that $\mathcal G_1\foeq{n}\mathcal G_2$ for a large enough $n$), to construct a successor relation $S_1$ on $\mathcal G_1$ and $S_2$ on $\mathcal G_2$ such that $(\mathcal G_1,S_1)$ and $(\mathcal G_2,S_2)$ stay \FO-similar. We will see that this entails that $\foeq{}$ refines $\sieq{}$ when the degree is bounded. From there, a standard finite-model-theoretic argument (namely, that $\foeq{n}$ has finite index and that each one of its classes is \FO-definable) gives the inclusion $\sifo\subseteq\FO$ on classes of bounded degree.

  It thus remains to construct suitable successor relations $S_1$ and $S_2$. First, we separate the neighborhood types occurring in $\mathcal G_1$ and $\mathcal G_2$ into two categories:

  \begin{itemize}
  \item on the one hand, the rare neighborhood types, which have few occurrences in $\mathcal G_1$ and $\mathcal G_2$ (and thus, that have the same number of occurrences in both structures, by \FO-similarity)
  \item on the other hand, the frequent neighborhood types, which have many occurrences both in $\mathcal G_1$ and $\mathcal G_2$.
  \end{itemize}
  In order to make the proof of \FO-similarity of $(\mathcal G_1,S_1)$ and $(\mathcal G_2,S_2)$ as simple as possible, we want an element of $\mathcal G_1$ and its successor by $S_1$ (and similarly for $\mathcal G_2$ and $S_2$) to have the same neighborhood type in $\mathcal G_1$ as much as possible, and to be far away in $\mathcal G_1$, in order for the neighborhood types occurring in $(\mathcal G_1,S_1)$ to be as ``regular'' as possible. As long as there are at least two different neighborhood types, the first constraint obviously cannot be satisfied, but we will construct $S_1$ as close as possible to satisfying it.

  \medskip

  For instance, suppose that $\mathcal G_1$ contains three frequent neighborhood types $\tau_0,\tau_1$ and $\tau_2$, and one rare neighborhood type $\chi$ with two occurrences. At the end of the construction, $S_1$ will (mostly) look like in Figure~\ref{fig:ex}, where the relations of $\mathcal G_1$ have been omitted and the arrows represent $S_1$, which is indeed a circular successor. 

  Note that all the elements of neighborhood type $\tau_1$ form a segment wrt. $S_1$, as well as all the elements of neighborhood type $\tau_2$. The first frequent neighborhood type, $\tau_0$, has a special role in that it is used to embed all the elements of rare neighborhood type (here, $\chi$). Furthermore, and this is not apparent in the figure, two successive elements for $S_1$ are always distant in $\mathcal G_1$.

  \newcommand{\outr}{3.9}
  \newcommand{\midr}{3.5}
  \newcommand{\inr}{3.2}
    
  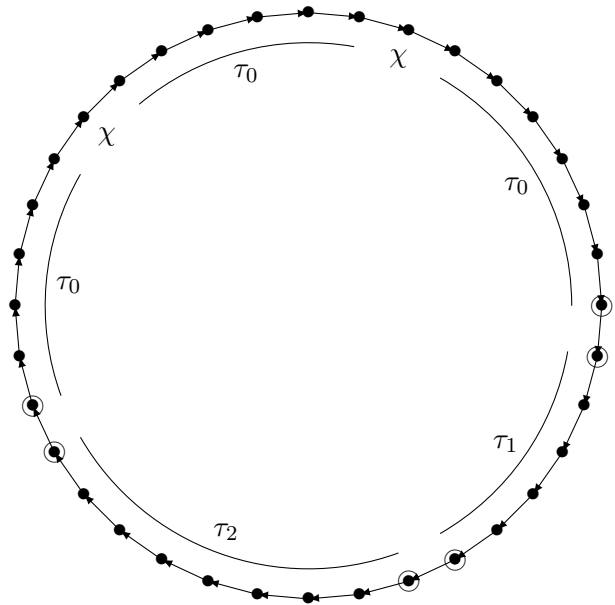
\begin{figure}[ht]
    \centering 
    \begin{tikzpicture}[baseline=(current bounding box.center)]
      
      \coordinate (c) at (0,0);
      
      \foreach \a in {0,...,35}{
        \coordinate (\a) at c++({\a*10}:\outr);
        \coordinate (m\a) at c++({\a*10}:\midr);
      }

      \foreach \a in {0,...,35}
      \draw[->,>=latex] let \n1={int(mod(\a+1,36))} in (\n1) -- (\a);
      
      \draw (m0) arc (0:60:\midr);
      \draw (m8) arc (80:130:\midr);
      \draw (m15) arc (150:200:\midr);
      \draw (m21) arc (210:290:\midr);
      \draw (m30) arc (300:350:\midr);

      \foreach \a in {3,10.5,17.5}{
        \coordinate (i) at c++({\a*10}:\inr);
        \draw (i) node{$\tau_0$};
      }

      \foreach \a in {7,14}{
        \coordinate (i) at c++({\a*10}:\midr);
        \draw (i) node{$\chi$};
      }

      \coordinate (i1) at c++(325:\inr);
      \draw (i1) node{$\tau_1$};

      \coordinate (i2) at c++(250:\inr);
      \draw (i2) node{$\tau_2$};
      
      \foreach \a in {0,...,35}
      \draw (\a) node{$\bullet$};

      \foreach \a in {0,20,21,29,30,35}
      \draw (\a) node{\large $\circledcirc$};

    \end{tikzpicture}
    \caption{Illustration of $S_1$ when there are three frequent neighborhood types ($\tau_0$, $\tau_1$, $\tau_2$) and one rare neighborhood type ($\chi$) in $\mathcal G_1$. The elements of rare neighborhood type are surrounded by occurrences of the first frequent neighborhood type, $\tau_0$. Junction elements are circled.}
    \label{fig:ex}
  \end{figure}

  \medskip
  
  Keeping this idea in mind, $S_1$ (and similarly, $S_2$) is constructed iteratively, by adding $S$-edges to the initial structures one at a time. For practical reasons, we will start the construction of $S_1$ around occurrences of rare neighborhood types: for each element $x$ of rare neighborhood type, we find two elements of neighborhood type $\tau_0$ that are far apart in $\mathcal G_1$, and far from $x$. Then we add two $S$-edges in order for those two elements to become the $S_1$-predecessor and the $S_1$-successor of $x$. We repeat this process for every element of rare neighborhood type (and actually, for every element that belongs to the neighborhood of a rare element) until each one is protected by a ball of elements of frequent neighborhood type. This is possible because there are few elements of rare neighborhood type, and many elements of any frequent neighborhood type; since the degree is bounded, those elements of frequent neighborhood type are spread across the structure, and can be found far from the current construction.

  Once this is done, we apply a similar construction around elements of frequent neighborhood types that will, in the end, be the $S_1$-predecessor or $S_1$-successor of an element of another frequent neighborhood type - that is, elements that will be at the border of the segments (for $S_1$) of a given frequent neighborhood type. Such elements are circled in Figure~\ref{fig:ex}. We must choose only a small number of such elements (two for each frequent neighborhood type, of which there are few due to the degree boundedness hypothesis), hence we can find enough far-apart elements of frequent neighborhood type to embed them. Once again, degree boundedness is crucial.

  After these two steps, $S_1$ has been constructed around all the singular points. It only remains to complete $S_1$ by adding edges between the remaining elements (all of which are occurrences of frequent neighborhood types), in such a way that elements of a same frequent neighborhood type end up forming a segment for $S_1$, and such that $S_1$ brings together elements that were far apart in the initial structure $\mathcal G_1$. Once again, the high number of occurrences of each frequent neighborhood type allows us to do so.

  \medskip

  Applying the same construction to $\mathcal G_2$, we end up with two structures $(\mathcal G_1,S_1)$ and $(\mathcal G_2,S_2)$ that cannot be distinguished by \FO-formulas of small (wrt. the initial \FO-similarity index between $\mathcal G_1$ and $\mathcal G_2$) quantifier rank, which concludes the proof.
  
  We have given a global overview of the construction process of $S_1$; however, there are technical difficulties to take care of, which are dealt with in Section~\ref{sec:proof}. For that, we need the definitions given in Section~\ref{sec:fractal}, which formalize the notion of regularity of a neighborhood type in $(\mathcal G_1,S_1)$ and $(\mathcal G_2,S_2)$.
\end{proof}

Let us now prove that our decision to consider circular successors instead of the more traditional linear ones (with a minimal and a maximal element) bears no consequence on this result. If we define $\lsifo$ in the same way as $\sifo$, but where the invariant relation is a linear successor $\bar S$, we get:

\begin{lem}
  For every vocabulary $\Sigma$, $\lsifo$ and $\sifo$ define the same properties of $\Sigma$-structures.
\end{lem}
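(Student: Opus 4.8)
The plan is to prove equi-expressivity in both directions, exploiting the fact that a circular successor and a linear successor on the same universe differ by exactly one edge. Indeed, if $S$ is a circular successor, then deleting any edge $(u,v)\in S$ yields a linear successor $S\setminus\{(u,v)\}$ (with $v$ minimal and $u$ maximal); conversely, if $\bar S$ is a linear successor with minimal element $m$ and maximal element $M$, then $\bar S\cup\{(M,m)\}$ is a circular successor. I would turn each of these passages into a syntactic reduction between sentences, and check in each case that invariance is preserved and that the defined property is unchanged. Throughout I assume structures are nonempty (on the empty universe neither kind of successor exists); the singleton case, where the unique circular successor is the self-loop and the unique linear successor is the empty relation, is covered by the same formulas.

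For $\sifo\subseteq\lsifo$, I would start from a successor-invariant $\varphi\in\FO(\Sigma\cup\{S\})$. From a linear successor $\bar S$, the associated circular successor is $\FO$-definable: writing $\mathrm{max}(x)$ for $\neg\exists y\,\bar S(x,y)$ and $\mathrm{min}(y)$ for $\neg\exists x\,\bar S(x,y)$, set $\chi(x,y):=\bar S(x,y)\vee(\mathrm{max}(x)\wedge\mathrm{min}(y))$. Let $\psi\in\FO(\Sigma\cup\{\bar S\})$ be obtained from $\varphi$ by replacing every atom $S(x,y)$ with $\chi(x,y)$. Then for every $\mathcal A$ and every linear successor $\bar S$, we have $(\mathcal A,\bar S)\models\psi$ iff $(\mathcal A,\bar S\cup\{(M,m)\})\models\varphi$. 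Since every circular successor arises as $\bar S\cup\{(M,m)\}$ for some linear $\bar S$, and $\varphi$ is successor-invariant, the truth value of $\psi$ does not depend on $\bar S$; hence $\psi\in\lsifo(\Sigma)$ and it defines exactly the property defined by $\varphi$.

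The hard direction is $\lsifo\subseteq\sifo$, because a pure cycle offers no canonical place to cut, so one cannot $\FO$-define a single linear successor from a circular one. The observation that rescues this is that the starting sentence $\psi\in\FO(\Sigma\cup\{\bar S\})$ is linear-successor-invariant, so \emph{the choice of cut is irrelevant}. Concretely, I would let $\psi'(u,v)$ be obtained from $\psi$ by replacing every atom $\bar S(x,y)$ with $S(x,y)\wedge\neg(x=u\wedge y=v)$ (renaming bound variables to avoid capturing $u,v$), and set $\varphi:=\exists u\,\exists v\,\big(S(u,v)\wedge\psi'(u,v)\big)$. For any edge $(u,v)\in S$, the substitution interprets $\bar S$ as the linear successor $S\setminus\{(u,v)\}$, so $(\mathcal A,S)\models\psi'(u,v)$ iff $(\mathcal A,S\setminus\{(u,v)\})\models\psi$; by the invariance of $\psi$ this holds for \emph{some} edge iff it holds for \emph{every} edge iff $\mathcal A\models\psi$. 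Consequently $(\mathcal A,S)\models\varphi$ iff $\mathcal A\models\psi$, independently of the circular successor $S$, which simultaneously shows $\varphi\in\sifo(\Sigma)$ and that it defines the property defined by $\psi$. The only point to verify carefully is that $S\setminus\{(u,v)\}$ is genuinely a linear successor whenever $(u,v)\in S$, which follows from $S$ being a single cycle.
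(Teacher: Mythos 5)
Your proposal is correct and follows essentially the same route as the paper: the first direction replaces $S(x,y)$ by "$\bar S(x,y)$ or ($x$ maximal and $y$ minimal)", and the second existentially quantifies over where to cut the cycle and uses linear-successor-invariance of $\psi$ to see that the cut point is irrelevant (the paper quantifies over the minimal element rather than the removed edge, which is the same thing since in a circular successor an edge is determined by its target). Your explicit treatment of the singleton case is a small extra point of care not present in the paper.
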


\begin{proof}
  Given $\varphi\in\sifo$, let us prove that there exists a formula $\psi\in\lsifo$ such that $\psi$ is equivalent to $\varphi$ (\ie for every $\Sigma$-structure $\mathcal A$, $\mathcal A\models\varphi$ iff $\mathcal A\models\psi$).
  
  Let $\psi$ be defined as $\varphi$ in which every atom $S(x,y)$ has been replaced with the formula $\bar S(x,y)\vee \neg\exists z (\bar S(x,z)\vee \bar S(z,y))$.

  Let $\mathcal A$ be a $\Sigma$-structure and $\bar S$ be a linear successor on $A$. Then $(\mathcal A,\bar S)\models\psi$ iff $(\mathcal A, S)\models\varphi$, where $S$ is the circular successor obtained from $\bar S$ by adding an edge from the maximal element to the minimal one.

  This guarantees that $\psi\in\lsifo$, and that $\psi$ and $\varphi$ are equivalent.

  \medskip

  Conversely, let $\psi\in\lsifo$ and let $\varphi$ be the formula $\exists\min\,\text{Cut}(\psi)$, where $\text{Cut}(\psi)$ is obtained by replacing in $\psi$ every $\bar S(x,y)$ with $S(x,y)\wedge\neg y=\min$.

  Let $\mathcal A$ be a $\Sigma$-structure, let $S$ be a circular successor on $A$, and let $\min\in A$. Then $(\mathcal A, S,\min)\models\text{Cut}(\psi)$ iff $(\mathcal A, \bar S)\models\psi$, where $\bar S$ is the linear successor obtained from $S$ by removing the edge pointing to $\min$. Hence $(\mathcal A,S)\models\varphi$ iff there exists a linear successor $\bar S$ obtained from $S$ by an edge removal such that $(\mathcal A,\bar S)\models\psi$, that is iff $\mathcal A\models\psi$.

  This ensures that $\varphi\in\sifo$ and that $\varphi$ and $\psi$ are equivalent.  
\end{proof}

\section{Fractal types and layering}
\label{sec:fractal}

To prove Theorem~\ref{th:collapse}, we will start from two structures $\mathcal G_1$ and $\mathcal G_2$ that are \FO-similar, and construct successor relations $S_1$ and $S_2$ on their universes so that the structures remain \FO-similar when we take into account these additional successor relations.

We want to construct $S_\epsilon$, for $\epsilon\in\{1,2\}$, in a way that makes $\tp{a}{(\mathcal G_\epsilon,S_\epsilon)}{k}$ as regular as possible for every $a\in G_\epsilon$, in order to ease the proof of \FO-similarity of $(\mathcal G_1,S_1)$ and $(\mathcal G_2,S_2)$.

Ideally, the $S_\epsilon$-successors and $S_\epsilon$-predecessors of any element should have the same $k$-neighborhood type in $\mathcal G_\epsilon$ as this element. On top of that, there should not be any overlap between the $k$-neighborhoods in $\mathcal G_\epsilon$ of elements that are brought closer by $S_\epsilon$ (this ``independence'' is captured by the layering property, introduced in Definition~\ref{def:layer}).

\bigskip

Let us now try to visualize what $\tp{a}{(\mathcal G_\epsilon,S_\epsilon)}{k}$ would look like in those perfect conditions, for some $a$ of $k$-neighborhood type $\tau$ in $\mathcal G_\epsilon$, with Figure~\ref{fig:fractal} as a visual aid.

Let $a^+$ be the successor of $a$ by $S_\epsilon$: in the $k$-neighborhood of $a$ in $(\mathcal G_\epsilon,S_\epsilon)$ appears the $(k-1)$-neighborhood of $a^+$ in $\mathcal G_\epsilon$. But in these ideal conditions, we have that $\tp{a^+}{\mathcal G_\epsilon}{k}=\tp{a}{\mathcal G_\epsilon}{k}$, hence in $\tp{a}{(\mathcal G_\epsilon,S_\epsilon)}{k}$, we see that the pattern $\tp{a}{\mathcal G_\epsilon}{k}$ is repeated around the $S_\epsilon$-successor of $a$, with a radius shrunk by one. If we follow again $S_\epsilon$, the same neighborhood type will appear once more, this time with radius $k-2$, and so on.

Let us now take a step sideways in the $k$-neighborhood of $a$ in $\mathcal G_\epsilon$ (i.e. in the horizontal plane in the figure), and consider some element $x$ at distance $d$ from $a$ in $\mathcal G_\epsilon$. In these perfect conditions, $x$ and its $S_\epsilon$-successor have the same $k$-neighborhood type in $\mathcal G_\epsilon$. Of course, only a part of these neighborhoods will appear in the $k$ neighborhood of $a$ in $(\mathcal G_\epsilon,S_\epsilon)$; namely, the $(k-d)$-neighborhood of $x$ in $\mathcal G_\epsilon$ and the $(k-d-1)$-neighborhood of $x^+$ in $\mathcal G_\epsilon$. If we move to the $S_\epsilon$-successor of $x^+$, we will find that the visible part in $\tp{a}{(\mathcal G_\epsilon,S_\epsilon)}{k}$ of its neighborhood in $\mathcal G_\epsilon$ will have the same $(k-d-2)$-radius type as $x^+$.

Of course, everything we have said about $S_\epsilon$-successors also holds for $S_\epsilon$-predecessors, and for any iteration of upward/downward (i.e. along $S_\epsilon$-edges) and sideways (i.e. in $\mathcal G_\epsilon$) steps, thus encapsulating $a$ in a very regular neighborhood of radius $k$ in $(\mathcal G_\epsilon,S_\epsilon)$.

It should now be visible that this construction is reminiscent of a fractal, in that some patterns - here, the neighborhood types - are repeated each time we follow the successor relation, their size shrinking time after time (although in our setting, the patterns are obviously repeated only a finite number of times).

This is why we introduce in Definition~\ref{def:fractal_types} the \textbf{fractal type} $\fractype{\tau}{k}$.

Aside from a small number of exceptions (namely, for neighborhood types that do not occur frequently enough, and around the transitions between frequent neighborhood types), every element of $k$-neighborhood type $\tau$ in $\mathcal G_\epsilon$ will have the fractal neighborhood type $\fractype{\tau}{k}$ in $(\mathcal G_\epsilon,S_\epsilon)$.

\medskip
If $\mathcal N$ is a representative of a neighborhood type $\tau$, $c^{\mathcal N}$ is called the \textbf{center of $\mathcal N$}. Recall from Definition~\ref{def:types} that $c$ is the constant symbol added to $\Sigma$ when considering neighborhood types to pinpoint the central element of a neighborhood.

\begin{defi}[Fractal types]
  \label{def:fractal_types}
  We define by induction on $k\in\N$, for every $k$-neighborhood type $\tau$ over $\Sigma$, the $k$-neighborhood types $\fractype{\tau}{k}$, $\uppertype{\tau}{k}$ and $\lowertype{\tau}{k}$ over $\Sigma\cup\{S\}$.
  
  For $k=0$, $\fractype{\tau}{0}=\uppertype{\tau}{0}=\lowertype{\tau}{0}=\tau$ (meaning that $S$ is interpreted as the empty relation in $\fractype{\tau}{0}$, $\uppertype{\tau}{0}$ and $\lowertype{\tau}{0}$).

  Starting from a representative $\mathcal N$ of center $a$ of the isomorphism class $\tau$, we construct $\mathcal N'$, whose role is to serve as an intermediate step in the construction of the full fractal neighborhood, as follows.

  For every $x\in N$ at distance $d\leq k-1$ from $a$, let $\mathcal M^+_x$ and $\mathcal M^-_x$ be structures of respective isomorphism type $\uppertype{\chi}{k-d-1}$ and $\lowertype{\chi}{k-d-1}$, where $\chi$ is the $(k-d-1)$-neighborhood type of $x$ in $\mathcal N$, and of respective center $x^+$ and $x^-$.

  $\mathcal N'$ is defined as the disjoint union of $\mathcal N$ and all the $\mathcal M^+_x$ and the $\mathcal M^-_x$, for $x\neq a$, together with all the edges $S(x,x^+)$ and $S(x^-,x)$.
  
  From there, $\mathcal N^+$ (resp. $\mathcal N^-$) is defined as the the disjoint union of $\mathcal N'$ and $\mathcal M^+_a$ (resp. $\mathcal M^-_a$) together with the edge $S(a,a^+)$ (resp. $S(a^-,a)$). Likewise, $\mathcal N^{+/-}$ is defined as the disjoint union of $\mathcal N'$, $\mathcal M^+_a$ and $\mathcal M^-_a$ together with the edges $S(a,a^+)$ and $S(a^-,a)$. In each case, $a$ is taken as the center.

  Now, $\fractype{\tau}{k}$, $\uppertype{\tau}{k}$ and $\lowertype{\tau}{k}$ are defined respectively as the isomorphism type of $\mathcal N^{+/-}$, $\mathcal N^+$ and $\mathcal N^-$.

  An illustration of this definition is given in Figure~\ref{fig:fractal}.

\end{defi}

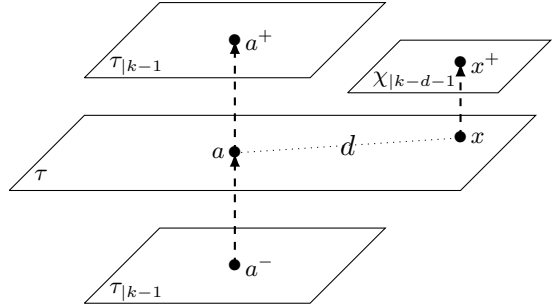
\begin{figure}[!ht]
  \centering
  \begin{tikzpicture}[baseline=(current bounding box.center)]
    
    \coordinate (1) at (0,0);
    \coordinate (2) at (0,1.5);
    \coordinate (3) at (0,-1.5);
    \coordinate (4) at (3,0.2);
    \coordinate (5) at (3,1.2);
    
    \draw (-3,-0.5) -- (3,-0.5) -- (4,0.5) -- (-2,0.5) -- (-3,-0.5);
    \draw (-2,1) -- (1,1) -- (2,2) -- (-1,2) -- (-2,1);
    \draw (-2,-2) -- (1,-2) -- (2,-1) -- (-1,-1) -- (-2,-2);

    \draw (1.5,0.8) -- (3.5,0.8) -- (4.2,1.5) -- (2.2,1.5) -- (1.5,0.8);

    \draw (1) node[left] {\footnotesize $a$} node {$\bullet$};
    \draw (2) node[right] {\footnotesize $a^+$} node {$\bullet$};
    \draw (3) node[right] {\footnotesize $a^-$} node {$\bullet$};
    \draw (4) node[right] {\footnotesize $x$} node {$\bullet$};
    \draw (5) node[right] {\footnotesize $x^+$} node {$\bullet$};

    \draw (-2.8,-0.5) node[above right] {\footnotesize $\tau$} ;
    \draw (-1.8,0.9) node[above right] {\footnotesize $\tau_{|k-1}$} ;
    \draw (-1.8,-2.1) node[above right] {\footnotesize $\tau_{|k-1}$} ;
    \draw (1.7,0.7) node[above right] {\footnotesize $\chi_{|k-d-1}$} ;
    
    \foreach \from/\to in {1/2,3/1,4/5}
    \draw[->,thick,dashed,>=latex] (\from) to (\to);

    \draw[dotted] (1) to node[midway]{$d$} (4);
        
  \end{tikzpicture}
  \caption{Partial representation of $\mathcal N^{+/-}$, of type $\fractype{\tau}{k}$. Here, $\chi$ is the $(k-d)$-neighborhood type of the element $x$, at distance $d$ from $a$ in $\tau$. The dashed arrows represent $S$-edges.}
  \label{fig:fractal}
\end{figure}

\begin{defi}[Layering]
  \label{def:layer}
  We say that an $r$-neighborhood $\mathcal N$ over $\Sigma\cup\{S,c\}$ is \textbf{layered} if it does not contain any cycle going through an $S$-edge. Every $\fractype{\tau}{r}$ is obviously layered by construction.

  We say that a structure over $\Sigma\cup\{S\}$ satisfies the property \lay{r} iff all the $r$-neighborhoods of this structure are layered. 

\end{defi}

It turns out \lay{r} can be reformulated in a way that does not involve the $r$-neighborhoods of the structure.

\begin{lem}
  A structure $\mathcal G$ over $\Sigma\cup\{S\}$ satisfies \lay{r} if and only if it contains no cycle of length at most $2r+1$ going through an $S$-edge.
\end{lem}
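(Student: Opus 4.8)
The plan is to prove the two implications separately, each by moving a cycle between $\mathcal G$ and one of its $r$-neighborhoods. I would work throughout in the Gaifman graph of $\mathcal G$ (which contains the $S$-edges, so that distances count them), and rely on two elementary observations: every shortest path from a vertex towards a fixed centre $a$ stays inside the ball $\{x:\dist{\mathcal G}{a}{x}\le r\}$; and if two elements occur together in a tuple of some relation, then every element of that tuple is at Gaifman-distance at most $1$ from each of them.

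For the implication ``short bad cycle $\Rightarrow$ \lay{r} fails'', suppose $\mathcal G$ contains a cycle $(x_0,\dots,x_{l-1})$ of length $l\le 2r+1$ through an $S$-edge. I would centre a neighbourhood at a cycle vertex $a$ furthest from that $S$-edge; then every $x_i$ is within $\mathcal G$-distance $r$ of $a$, and moreover every cycle edge \emph{other than} the $S$-edge has an endpoint at distance at most $r-1$ from $a$ (when $l=2r+1$, the two endpoints of the $S$-edge are exactly the unique pair of vertices at distance $r$). The point to watch is the possibly high arity of the relations in $\Sigma$: an edge of the cycle survives in the induced neighbourhood $\neigh{a}{\mathcal G}{r}$ only if the entire tuple realising it lies in the ball. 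For a non-$S$ edge this is guaranteed by the second observation above, since one of its endpoints is at distance at most $r-1$; for the $S$-edge it is immediate, $S$ being binary. Thus the whole cycle, $S$-edge included, appears in $\neigh{a}{\mathcal G}{r}$, which is therefore not layered.

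For the converse ``\lay{r} fails $\Rightarrow$ short bad cycle'', suppose some neighbourhood $\neigh{a}{\mathcal G}{r}$ contains a cycle $C$ through an $S$-edge $e$; all its vertices then lie in $B=\{x:\dist{\mathcal G}{a}{x}\le r\}$. I would fix a shortest-path tree $T$ rooted at $a$ spanning $B$; every tree path to the root has length at most $r$ and stays in $B$, so the fundamental cycle of any non-tree edge $\{x,y\}$ has length at most $\dist{\mathcal G}{a}{x}+\dist{\mathcal G}{a}{y}+1\le 2r+1$ and lies in $B$. It then remains to exhibit such a fundamental cycle passing through an $S$-edge. If $e\notin T$, its own fundamental cycle does the job. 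If $e\in T$, I would argue by a cut: deleting $e$ splits $B$ into two vertex classes, and since $C$ is a cycle it meets this bipartition in an even number of edges; as $e$ is one of them and is the only \emph{tree} edge crossing it, $C$ must contain a non-tree edge $f\ne e$ crossing it too. The tree path between the endpoints of $f$ then crosses the bipartition, hence runs through $e$, so the fundamental cycle of $f$ contains the $S$-edge $e$ and has length at most $2r+1$, as required.

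I expect the arity subtlety in the first implication — which is what forces the antipodal choice of centre — and, above all, the ``passage through an $S$-edge'' bookkeeping in the second to be where care is needed. The latter is the genuine obstacle: a shortest path argument recentred on $u,v$ alone tends to produce cycles that either avoid $S$ or close up degenerately when $u,v$ are adjacent only through $e$. The cut-parity observation sidesteps all of this, since it locates a short fundamental cycle through $e$ without any case distinction on the mutual positions of the vertices.
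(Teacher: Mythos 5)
Your proof is correct, and in the harder (``only if'') direction it takes a genuinely different route from the paper's. The paper introduces, for each $u$ in the neighbourhood $\neigh{x}{\mathcal G}{r}$, the cone $\cone{u}$ of elements all of whose shortest paths to the centre pass through $u$, and then splits into two cases according to whether an endpoint of the $S$-edge lies in the cone of the other, building an explicit short cycle separately in each case. Your argument replaces this with a single uniform one: every fundamental cycle of a shortest-path spanning tree of the ball rooted at the centre automatically has length at most $2r+1$, and the cut-parity observation (the cycle witnessing the failure of layeredness crosses the cut determined by the tree edge $e$ an even number of times, hence at least twice) produces a non-tree edge whose fundamental cycle runs through the $S$-edge. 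This eliminates the case distinction and the bookkeeping about which paths avoid which vertices, at the modest cost of fixing a spanning tree. In the easy direction you are in fact more careful than the paper, which asserts that the $r$-neighbourhood of \emph{any} cycle vertex contains the whole cycle: for a cycle of length exactly $2r+1$ and relations of arity at least $3$, an edge realised only by a tuple containing an element at distance $r+1$ from the centre would not survive in the induced substructure. Your antipodal choice of centre is precisely what guarantees that the only edge with both endpoints at distance $r$ is the binary $S$-edge, so the whole cycle does appear; this quietly repairs a small imprecision in the paper's one-line argument. Both proofs yield the same bound.
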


\begin{proof}
  If $\mathcal G$ contains a cycle of length at most $2r+1$ going through an $S$-edge, then the $r$-neighborhood of any vertex of this cycle contains the whole cycle, thus \lay{r} does not hold in $\mathcal G$.

  Conversely, we show that if some $r$-neighborhood contains a cycle going through an $S$-edge, then it must also contain a small (i.e. of length at most $2r+1$) such cycle.

  Suppose that there exists $x\in G$ such that $\neigh{x}{\mathcal G}{r}$ contains a cycle going through an $S$-edge, and let $S(y,z)$ be such an edge.

  For any $u\in\neigh{x}{\mathcal G}{r}$, we define the \textbf{cone $\cone{u}$ at $u$} as the set of elements $v\in\neigh{x}{\mathcal G}{r}$ such that every shortest path from $x$ to $v$ in $\neigh{x}{\mathcal G}{r}$ goes through $u$.
  
  There are two cases, depending on the relative position of $y$, $z$ and their cones:
  
  \begin{itemize}
  \item If $z\notin\cone{y}$ and $y\notin\cone{z}$, let $p_{y\rightarrow x}$ (resp. $p_{x\rightarrow z}$) be a path of minimal length from $y$ to $x$, not going through $z$ (resp. from $x$ to $z$, not going through $y$). 

    Let $X$ be the set of nodes appearing both in $p_{y\rightarrow x}$ and $p_{x\rightarrow z}$. $X$ is not empty, as $x\in X$, and $y,z\notin X$. Let $v\in X$ such that $\dist{\mathcal G}{x}{v}$ is maximal among the nodes of $X$, and let $p_{y\rightarrow v}$ (resp. $p_{v\rightarrow z}$) be the segment of $p_{y\rightarrow x}$ (resp. of $p_{x\rightarrow z}$) from $y$ to $v$ (resp. from $v$ to $z$).

    Then $p_{v\rightarrow z}\cdot (z,y)\cdot p_{y\rightarrow v}$ is a cycle going through an $S$-edge, and is of length $\leq 2r+1$. This is illustrated in Figure~\ref{fig:outside_cones}.

    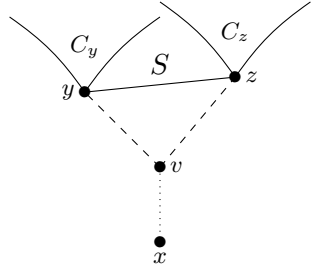
\begin{figure}[ht]
      \centering
      \begin{tikzpicture}[baseline=(current bounding box.center)]
        
        \coordinate (1) at (0,0);
        \coordinate (2) at (-1,2);
        \coordinate (3) at (1,2.2);
        \coordinate (4) at (-2,3);
        \coordinate (5) at (0,3);
        \coordinate (6) at (0,3.2);
        \coordinate (7) at (2,3.2);
        \coordinate (8) at (-1,2.6);
        \coordinate (9) at (1,2.8);
        \coordinate (10) at (0,1);

        \draw (1) node[below] {\footnotesize $x$} node {$\bullet$};
        \draw (2) node[left] {\footnotesize $y$} node {$\bullet$};
        \draw (3) node[right] {\footnotesize $z$} node {$\bullet$};
        \draw (10) node[right] {\footnotesize $v$} node {$\bullet$};

        \draw (8) node {\footnotesize $\cone{y}$};
        \draw (9) node {\footnotesize $\cone{z}$};

        \draw (2) to node[midway,above] {$S$} (3);
        \draw[dashed] (10) to (2);
        \draw[dashed] (10) to (3);
        \draw[dotted] (10) to (1);

        \foreach \from/\to in {2/4,3/6}
        \draw[-][bend right=10] (\from) to (\to);

        \foreach \from/\to in {2/5,3/7}
        \draw[-][bend left=10] (\from) to (\to);
        
      \end{tikzpicture}
      \caption{Existence of a short cycle joining $y$, $z$ and $v$}
      \label{fig:outside_cones}
      
    \end{figure}
    
  \item Otherwise, suppose without loss of generality that $z\in\cone{y}$. This entails that $y\notin\cone{z}$ and $\dist{\mathcal G}{x}{z}=d+1$ where $d:=\dist{\mathcal G}{x}{y}$. 

    Let the initial cycle be $(z,v_1,\cdots,v_{m-1},y)$, with the notation $v_0=z$ and $v_m=y$.

    Let $i$ be the minimal integer such that $v_i\notin\cone{z}$. Let $p_{x\rightarrow v_i}$ be a shortest path from $x$ to $v_i$: by definition, it does not intersect $\cone{z}$, and has length at most $r$. Thus, there exists a path $p_{y\rightarrow v_i}=p_{y\rightarrow x}\cdot p_{x\rightarrow v_i}$ from $y$ to $v_i$ of length at most $r+d$ going only through nodes outside of $\cone{z}$.

    Since $v_{i-1}\in\cone{z}$, there exists a path $p_{v_{i-1}\rightarrow z}$ from $v_{i-1}$ to $z$ of length at most $r-(d+1)$ going only through nodes of $\cone{z}$. 
    
    Hence $p_{y\rightarrow v_i}\cdot(v_i,v_{i-1})\cdot p_{v_{i-1}\rightarrow z}\cdot (z,y)$ is a cycle going though an $S$-edge, and its length is at most $2r+1$. This is depicted in Figure~\ref{fig:inside_cone}. 
    \qedhere

    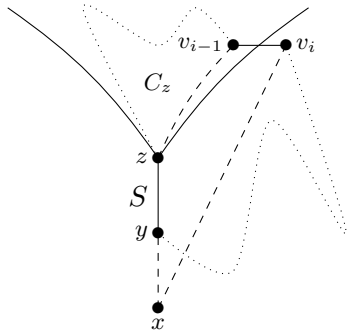
\begin{figure}[ht]
      \centering
      \begin{tikzpicture}[baseline=(current bounding box.center)]

        \coordinate (0) at (0,-1);
        \coordinate (1) at (0,0);
        \coordinate (2) at (0,1);
        \coordinate (3) at (-2,3);
        \coordinate (4) at (2,3);
        \coordinate (5) at (0,2);
        \coordinate (6) at (1,2.5);
        \coordinate (7) at (1.7,2.5);

        \draw (0) node[below] {\footnotesize $x$} node {$\bullet$};
        \draw (1) node[left] {\footnotesize $y$} node {$\bullet$};
        \draw (2) node[left] {\footnotesize $z$} node {$\bullet$};
        \draw (6) node[left] {\footnotesize $v_{i-1}$} node {$\bullet$};
        \draw (7) node[right] {\footnotesize $v_i$} node {$\bullet$};

        \draw (5) node {\footnotesize $\cone{z}$};

        \draw (1) to node[midway,left] {$S$} (2);
        \draw (6) to (7);
        \draw[dashed] (0) to (1);
        \draw[dashed] (0) to (7);
        \draw[dashed][bend left=10] (2) to (6);

        \foreach \from/\to in {2/3}
        \draw[-][bend right=10] (\from) to (\to);

        \foreach \from/\to in {2/4}
        \draw[-][bend left=10] (\from) to (\to);

        \draw [dotted] plot [smooth] coordinates {(1) (1,-0.5) (1.5,1.5) (2.5,0) (7)};
        \draw [dotted] plot [smooth] coordinates {(2) (-1,3) (0,2.5) (0.5,3) (6)};

        %\draw[dotted] (0) to (1);

      \end{tikzpicture}
      \caption{Existence of a short cycle joining $y$, $z$, $v_{i-1}$ and $v_i$}
      \label{fig:inside_cone}
    \end{figure}

  \end{itemize}
\end{proof}

\noindent This characterization of \lay{r} allows us to state the following lemma, which is now straightforward. It provides a method to add an $S$-edge without breaking the property \lay{r}.

\begin{lem}
  \label{lem:layer}
  Let $r\in\N$, and $(\mathcal G,S)$ be a structure satisfying \lay{r}.

  If $x,y\in G$ are such that $\dist{(\mathcal G,S)}{x}{y}> 2r$, then \lay{r} holds in $(\mathcal G,S\cup\{(x,y)\})$
  % layouting: I do not understand the point of this symbol?
  % \qed
\end{lem}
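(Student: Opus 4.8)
The plan is to invoke the characterization of \lay{r} established in the lemma just proved, which reduces layering to the absence of cycles of length at most $2r+1$ going through an $S$-edge. Thus it suffices to show that passing from $(\mathcal G,S)$ to $(\mathcal G,S\cup\{(x,y)\})$ cannot create such a short cycle. The key preliminary observation I would record is that the only change induced by adding the pair $(x,y)$ to $S$ is, on the Gaifman graph, the addition (if not already present) of the single edge between $x$ and $y$: every Gaifman edge of $(\mathcal G,S\cup\{(x,y)\})$ other than $\{x,y\}$ is already a Gaifman edge of $(\mathcal G,S)$, and every $S$-pair other than $(x,y)$ already belongs to $S$.

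With that in place, I would argue by contradiction. Suppose $(\mathcal G,S\cup\{(x,y)\})$ contains a cycle $C$ of length at most $2r+1$ going through an $S$-edge, and split into two cases according to whether $C$ uses the Gaifman edge $\{x,y\}$. If $C$ does \emph{not} use $\{x,y\}$, then all of its edges are old Gaifman edges, so $C$ is already a cycle of $(\mathcal G,S)$; moreover the $S$-edge it traverses cannot be the new pair $(x,y)$, since that would force $C$ to use the edge $\{x,y\}$, so it is an old $S$-edge, and $C$ already contradicts the characterization of \lay{r} in $(\mathcal G,S)$. If, on the other hand, $C$ \emph{does} use $\{x,y\}$, then, the vertices of a cycle being distinct, deleting this one edge leaves a path from $x$ to $y$ of length at most $2r$ all of whose edges are old Gaifman edges; hence $\dist{(\mathcal G,S)}{x}{y}\leq 2r$, contradicting the hypothesis $\dist{(\mathcal G,S)}{x}{y}>2r$. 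Either way we reach a contradiction, so no such short cycle exists and \lay{r} holds in $(\mathcal G,S\cup\{(x,y)\})$.

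As the statement is flagged as straightforward, there is no genuine obstacle; the only point demanding care is to keep the newly added $S$-pair $(x,y)$ distinct from the Gaifman edge $\{x,y\}$ it may induce. Making explicit that a short cycle through an $S$-edge must either reuse this Gaifman edge (forcing the forbidden short $x$–$y$ path) or be built entirely from old data (contradicting \lay{r} directly) is exactly what makes the case split exhaustive and the argument watertight.
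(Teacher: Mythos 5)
Your proof is correct and is exactly the argument the paper intends: the paper gives no explicit proof, deeming the lemma straightforward from the preceding characterization of \lay{r} as the absence of cycles of length at most $2r+1$ through an $S$-edge, and your case split (a short cycle either avoids the new Gaifman edge $\{x,y\}$, contradicting \lay{r} in $(\mathcal G,S)$, or uses it, yielding a path of length at most $2r$ from $x$ to $y$ and contradicting the distance hypothesis) is the natural way to carry that out. No gaps.
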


\section{Proof of Theorem~\ref{th:collapse}}
\label{sec:proof}

We are now ready to prove Theorem~\ref{th:collapse}. Recall the sketch of proof from Section~\ref{sec:prelim}. We proceed in several steps:

Section~\ref{subsec:general_method} details the general framework of the proof. In Section~\ref{subsec:rare_frequent}, we divide the neighborhood types into rare ones and frequent ones.

We then begin the construction of $S_1$: Section~\ref{subsec:rare} is dedicated to the construction of $S_1$ around the occurrences in $\mathcal G_1$ of rare neighborhood types. Then, in Section~\ref{subsec:junction}, we keep constructing $S_1$ around the occurrences (two for each neighborhood type) of frequent neighborhood types that are designed to make, when the construction is complete, the $S_1$-junction between two frequent neighborhood types.

At this point, $S_1$ will be fully built around the singular points of $\mathcal G_1$. Section~\ref{subsec:carry_over} deals with the transfer of this partial successor relation $S_1$ over to $\mathcal G_2$: this will result in a partial $S_2$, built in a similar way around the singular points of $\mathcal G_2$.

In Section~\ref{subsec:completion}, $S_1$ and $S_2$ are completed independently, to cover $G_1$ and $G_2$. These expansions do not need to be coordinated, since at this point, the elements that are not already covered by $S_1$ and $S_2$ are occurrences of frequent neighborhood types and their resulting neighborhood types will be regular (\ie fractal) both in $(\mathcal G_1,S_1)$ and $(\mathcal G_2,S_2)$.

We then give some simple examples in Section~\ref{subsec:examples}, before establishing properties of $S_1$ and $S_2$ in Section~\ref{subsec:properties}, and concluding the proof in Section~\ref{subsec:proof_conclusion}.

\subsection{General method}
\label{subsec:general_method}

Let $\mathcal C$ be a class of $\Sigma$-structures of degree at most $d$. We show the following: for every $\alpha\in\N$, there exists some $f(\alpha)\in\N$ such that, given $\mathcal G_1,\mathcal G_2\in\mathcal C$, if $\mathcal G_1\foeq{f(\alpha)}\mathcal G_2$ then $\mathcal G_1\sieq{\alpha}\mathcal G_2$. For that, we will exhibit successor relations $S_1$ and $S_2$ such that $(\mathcal G_1,S_1)\foeq{\alpha}(\mathcal G_2,S_2)$.

More precisely, using the notations from Definition~\ref{def:types}, we will show that $\threq{(\mathcal G_1,S_1)}{(\mathcal G_2,S_2)}{r}{t}$ where $r$ and $t$ depend on $\alpha$ and are large enough to ensure that $(\mathcal G_1,S_1)\foeq{\alpha}(\mathcal G_2,S_2)$. The existence of such $r$ and $t$ follows from the well-known Hanf threshold theorem, whose finite version is given in \cite{DBLP:journals/iandc/FaginSV95}, and stated as Theorem~4.24 in \cite{DBLP:books/sp/Libkin04}.

We will construct $S_1$ and $S_2$ iteratively in a way that ensures, at each step, that the property \lay{r} holds in $(\mathcal G_1,S_1)$ and in $(\mathcal G_2,S_2)$. The property \lay{r} is obviously satisfied in $(\mathcal G_1,\emptyset)$. Each time we add an $S_1$-edge or an $S_2$-edge, we will make sure that we are in the right conditions to call upon Lemma~\ref{lem:layer}, so that \lay{r} is preserved.

\begin{defi}
  \label{def:size_neigh}
  Let us define the two-variable function $N$ by $N(d,r):=d\cdot\frac{(d-1)^r-1}{d-2}+1$ if $d\neq 2$, and by $N(2,r):=2r+1$.
  
  Note that the size of any $r$-neighborhood of degree at most $d$ is bounded by $N(d,r)$.
\end{defi}

\subsection{Separation between rare and frequent neighborhood types}
\label{subsec:rare_frequent}

Knowing the values of $r$ and $t$ as defined in Section~\ref{subsec:general_method}, we are now able to divide the $r$-neighborhood types into two categories: the \textbf{rare neighborhood types} and the \textbf{frequent neighborhood types}. The intent is that the two structures have the same number of occurrences of every rare neighborhood type, and that frequent neighborhood types have many occurrences (wrt. the total number of occurrences of rare neighborhood types) in both structures. This ``many occurrences wrt.'' is formalized through a function $g$ which is to be specified later on.

More precisely,

\begin{lem}
  \label{lem:rare_freq}
  Given $d,r\in\N$ and an increasing function $g:\N\rightarrow\N$, there exists $p\in\N$ such that for every $\Sigma$-structures $\mathcal G_1,\mathcal G_2\in\mathcal C_d$ satisfying $\mathcal G_1\foeq{p}\mathcal G_2$, we can divide the $r$-neighborhood types over $\Sigma$ of degree at most $d$ into rare neighborhood types and frequent neighborhood types, such that

  \begin{itemize}
  \item every rare neighborhood type has the same number of occurrences in $\mathcal G_1$ and in $\mathcal G_2$

  \item both in $\mathcal G_1$ and in $\mathcal G_2$, every frequent neighborhood type has at least $g(\beta)$ occurrences, where $\beta$ is the number of occurrences of all the rare neighborhood types in the structure
    
  \item if there is no frequent neighborhood type, then $\mathcal G_1$ and $\mathcal G_2$ are isomorphic.
    
  \end{itemize}

\end{lem}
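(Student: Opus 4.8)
The plan is to reduce the statement to a counting argument over the \emph{finitely many} neighborhood types, and to use a pigeonhole argument over a telescoping sequence of thresholds to break the apparent circularity between the notion of ``frequent'' and the quantity $\beta$. First I would observe that, since the degree is bounded by $d$ and $\Sigma$ is finite and relational, there are only finitely many $r$-neighborhood types over $\Sigma$ of degree at most $d$: each such neighborhood has at most $N(d,r)$ elements (Definition~\ref{def:size_neigh}), and there are finitely many structures of bounded size up to isomorphism. Let $M$ be this number of types.

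The crux is the choice of threshold. I would introduce the finite strictly increasing sequence $t_0<t_1<\dots<t_{M+1}$ defined by $t_0=0$ and $t_{i+1}=\max(t_i,\,g(M\cdot t_i))+1$; these $M+2$ values cut out $M+1$ consecutive intervals $(t_i,t_{i+1}]$. I would then take $p$ large enough so that $\mathcal G_1\foeq{p}\mathcal G_2$ guarantees both (i) the Hanf condition $\threq{\mathcal G_1}{\mathcal G_2}{r}{t_{M+1}}$, and (ii) $p>M\cdot t_{M+1}$. For (i), the point is that for a fixed type $\tau$ and a fixed $j$, the property ``$\tau$ has at least $j$ occurrences'' is \FO-definable with quantifier rank bounded in terms of $N(d,r)$ and $j$: one describes an $r$-neighborhood of type $\tau$, which is of bounded size, and asserts the existence of $j$ pairwise distinct centers realizing it. Taking the finite conjunction over all $M$ types and all $j\leq t_{M+1}+1$ shows that $\threq{}{}{r}{t_{M+1}}$ is captured by \FO-sentences of bounded quantifier rank, so it follows from \FO-similarity at a sufficiently large depth.

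Given $\mathcal G_1\foeq{p}\mathcal G_2$, I would combine the Hanf condition with pigeonhole: the at most $M$ types contribute at most $M$ occurrence counts in $\mathcal G_1$, so among the $M+1$ intervals at least one, say $(t_{i^*},t_{i^*+1}]$, contains no such count. Because (i) forces every count that is at most $t_{M+1}$ to coincide in $\mathcal G_1$ and $\mathcal G_2$, this empty interval is common to both structures. I would then call a type \textbf{rare} when its number of occurrences is at most $t_{i^*}$ and \textbf{frequent} when it exceeds $t_{i^*}$; thanks to the gap, a frequent type in fact has more than $t_{i^*+1}$ occurrences, and the classification agrees in $\mathcal G_1$ and $\mathcal G_2$.

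Checking the three bullets is then routine. Rare types have at most $t_{i^*}\leq t_{M+1}$ occurrences, hence equal counts in both structures by the Hanf condition. The total number $\beta$ of rare occurrences is at most $M\cdot t_{i^*}$, so a frequent type has more than $t_{i^*+1}>g(M\cdot t_{i^*})\geq g(\beta)$ occurrences, using that $g$ is increasing. Finally, if there is no frequent type, then all $\leq M$ types are rare and each structure has at most $M\cdot t_{M+1}<p$ elements; two structures of size less than $p$ that satisfy the same \FO-sentences of quantifier rank $p$ are isomorphic, since each is described up to isomorphism by a sentence of quantifier rank one above its size, yielding $\mathcal G_1\simeq\mathcal G_2$. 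The only real obstacle is the circularity already mentioned: the threshold separating frequent from rare must exceed $g(\beta)$, while $\beta$ is bounded only in terms of that same threshold. The telescoping sequence $t_i$ together with the pigeonhole gap is precisely what resolves it, by locating a level above which $g(M\cdot t_{i^*})$ is already outstripped.
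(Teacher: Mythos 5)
Your proof is correct and follows essentially the same strategy as the paper's: both resolve the circularity between the rare/frequent threshold and $\beta$ by iterating $g$ roughly once per neighborhood type (your $t_{i+1}=\max(t_i,g(Mt_i))+1$ plays the role of the paper's $a_{i+1}=g(ia_i)$), both transfer the classification to $\mathcal G_2$ via a Hanf-type condition at the top threshold, and both handle the all-rare case by bounding the size of the structures. The only difference is presentational — you locate the cut by pigeonholing an empty interval among the sorted occurrence counts, whereas the paper finds it by a greedy scan of the types in increasing order of occurrence — and your version is, if anything, slightly more explicit about why the Hanf condition follows from \FO-similarity at sufficient depth.
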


\begin{proof}

  Let $\mathcal G_1$ and $\mathcal G_2$ be such that $\mathcal G_1\foeq{p}\mathcal G_2$, for an integer $p$ whose value will become apparent later in the proof.
  
  Let $\chi_1,\cdots,\chi_n$ be an enumeration of all the $r$-neighborhood types over $\Sigma$ of degree at most $d$, ordered in such a way that $\forall i<j,\nocc{\mathcal G_1}{\chi_i}\leq\nocc{\mathcal G_1}{\chi_j}$. Note that $n$ is a function of $d$ and $r$.

  The classification of neighborhood types between rare ones and frequent ones is done through Algorithm~\ref{alg:types}. The idea is to go through the $r$-neighborhood types in increasing order of occurrences in $\mathcal G_1$; if at some point we reach a neighborhood type with at least $g(\beta)$ occurrences in $\mathcal G_1$, where $\beta$ is the total number of occurrences of the previously visited neighborhood types, then we have found a separation between rare and frequent types. Otherwise, it means that there are few (wrt. $g$) occurrences of each neighborhood type, hence $\mathcal G_1$ is small and, as long as $p$ is large enough, $\mathcal G_2$ must be isomorphic to $\mathcal G_1$. 

  \begin{algorithm}[!ht]
    \small
    
    \caption{Separation between rare and frequent neighborhood types}
    \label{alg:types}
    
    \begin{algorithmic}[1]
      \State $\beta\gets 0$
      \State $i\gets 1$
      \While{$i\leq n$ and $\nocc{\mathcal G_1}{\chi_i}<g(\beta)$}
      \State $\beta\gets\beta+\nocc{\mathcal G_1}{\chi_i}$
      \State $i$++
      \EndWhile 

      \Comment \parbox[t]{.6\linewidth}{If $i\leq n$, $\chi_i$ is the frequent neighborhood type with the least occurrences in $\mathcal G_1$.

      If $i=n+1$, all the neighborhood types are rare.}
    \end{algorithmic}
  \end{algorithm}
  
  At the end of Algorithm~\ref{alg:types}, we call $\chi_1,\cdots,\chi_{i-1}$ the rare neighborhood types, and $\chi_i,\cdots,\chi_n$ the frequent ones.

  Note that $\beta$ indeed counts the total number of occurrences of rare neighborhood types in $\mathcal G_1$.

  We now define the integers $(a_i)_{1\leq i\leq n}$ as $a_1:=g(0)$ and $a_{i+1}:=g(ia_i)$.
  
  As $g$ is monotone, it is easy to show by induction that for each rare neighborhood type~$\chi_j$ with $j<i$, $\nocc{\mathcal G_1}{\chi_j}<a_j$.

  As long as $p$ is chosen large enough so that $\mathcal G_1\foeq{p}\mathcal G_2$ entails $\threq{\mathcal G_1}{\mathcal G_2}{r}{a_n}$, we have by construction that every rare neighborhood type has the same number of occurrences (which is smaller that $a_n$) in $\mathcal G_1$ and in $\mathcal G_2$. Furthermore, in $\mathcal G_1$ as in $\mathcal G_2$, if $\beta$ denotes the total number of occurrences of rare neighborhood types, every frequent neighborhood type has at least $g(\beta)$ occurrences.

  We just need to make sure that the two structures are isomorphic when all the neighborhood types are rare. If this is the case, then $|\mathcal G_1|=|\mathcal G_2|\leq n(a_n-1)$. Hence, as long as $p\geq n(a_n-1)$, $\mathcal G_1\foeq{p}\mathcal G_2$ implies $\mathcal G_1\simeq\mathcal G_2$ when all the neighborhood types are rare.
\end{proof}

Let $\tau_0,\cdots,\tau_{m-1}$ be the frequent neighborhood types. From now on, we suppose that $m\geq 1$: there is nothing to do if $m=0$, since $\mathcal G_1$ and $\mathcal G_2$ are isomorphic. Let $\beta$ be the total number of occurrences of rare neighborhood types in $\mathcal G_1$.

\subsection{Construction of \texorpdfstring{$S_1$}{S1} around elements of rare neighborhood type}
\label{subsec:rare}

To begin with, let us focus on $\mathcal G_1$, and start the construction of $S_1$ around occurrences of rare neighborhood types. Algorithm~\ref{alg:r} deals with this construction. In the following, $R_{\leq k}$ will denote $\bigcup\limits_{0\leq j\leq k} R_j$.

For a given occurrence $x$ of some rare neighborhood type, we choose as its $S_1$-successor and $S_1$-predecessor two occurrences of neighborhood type $\tau_0$ (the first frequent neighborhood type), far apart from one another and from $x$. The existence of those elements relies on the bounded degree hypothesis. This is done on lines~\ref{line:r_succ} and \ref{line:r_pred}.

When line~\ref{line:r_rare_protected} is reached, every occurrence of rare neighborhood type has an $S_1$-predecessor and an $S_1$-successor of neighborhood type $\tau_0$.

\medskip

It is not enough, however, only to deal with the occurrences of rare neighborhood types. We need to ``protect'' them up to distance $r$ in $(\mathcal G_1,S_1)$. For that purpose, we construct the subsets $R_k$ of $G_1$, for $0\leq k\leq r$.

For each $k$, the subset $R_k$ is the set of elements at distance exactly $k$ in $(\mathcal G_1,S_1)$ from the set of occurrences of rare neighborhood types. Until we have reached $k=r$ (that is, distance $r$ from occurrences of rare neighborhood types), every element of $R_k$ is given an $S_1$-successor (line~\ref{line:r_k_succ}) and/or an $S_1$-predecessor (line~\ref{line:r_k_pred}) of its neighborhood type, if it does not already have one. Once again, those elements are required to be far (\ie at distance greater than $2r$) from what already has been constructed.

%Provided that $g$ is large enough, it is always possible to find $x^+$ and $x^-$ on lines~\ref{line:r_succ}, \ref{line:r_pred}, \ref{line:r_k_succ} and \ref{line:r_k_pred}. Indeed, all the types considered are frequent ones, and the size of the $(2r+1)$-neighborhood of $R_{\leq k+1}$ is bounded by a function of $d$, $r$ and $\beta$ (the total number of occurrences of rare types in $\mathcal G_1$).

Provided that $g$ is large enough, it is always possible to find $x^+$ and $x^-$ on lines~\ref{line:r_succ}, \ref{line:r_pred}, \ref{line:r_k_succ} and \ref{line:r_k_pred}. Indeed, all the neighborhood types considered are frequent ones, and the size of the $2r$-neighborhood of $R_{\leq k+1}$ is bounded by a function of $d$, $r$ and $\beta$ (the total number of occurrences of rare neighborhood types in $\mathcal G_1$). More precisely, at any point of the construction, $(\mathcal G_1,S_1)$ has degree at most $d+2$. Hence, the $2r$-neighborhood of $R_r$ has size at most \[\beta N(d+2,3r)\] (recall the definition of $N$ from Definition~\ref{def:size_neigh}), and it is enough to make sure that \[g(\beta)\geq\beta N(d+2,3r)+1\,.\]

\begin{algorithm}[!ht]
  \small

  \caption{Construction of $S_1$ around elements of rare neighborhood type}
  \label{alg:r}

  \begin{algorithmic}[1]
    \State $S_1\gets \emptyset$
    \State $R_0\gets \{x\in G_1:\tp{x}{\mathcal G_1}{r}\text{ is rare}\}$

    \State $R_1,\cdots,R_r\gets\emptyset$

    \ForAll{$x\in R_0$}
    \ForAll{neighbors $y\notin R_{\leq 1}$ of $x$ in $\mathcal G_1$}
    \State $R_1\gets R_1\cup\{y\}$
    \EndFor
    \State{\label{line:r_succ}find \parbox[t]{.5\linewidth}{$x^+$ such that
        
        $\tp{x^+}{\mathcal G_1}{r}=\tau_0$ and
        
        $\dist{(\mathcal G_1,S_1)}{x^+}{R_{\leq 1}}>2r$}}
    
    \Comment{\parbox[t]{.5\linewidth}{We pick a node at distance greater than $2r$ in compliance with Lemma~\ref{lem:layer}, so that neighborhoods stay layered. 
        
        Recall that $\tau_0$ is the first frequent neighborhood type.}}

    \State $R_1\gets R_1\cup\{x^+\}$
    \State $S_1\gets S_1\cup\{(x,x^+)\}$
    \State{\label{line:r_pred}find \parbox[t]{.5\linewidth}{$x^-$ such that

        $\tp{x^-}{\mathcal G_1}{r}=\tau_0$ and
        
        $\dist{(\mathcal G_1,S_1)}{x^-}{R_{\leq 1}}>2r$}}
    \State $R_1\gets R_1\cup\{x^-\}$
    \State $S_1\gets S_1\cup\{(x^-,x)\}$
    \EndFor\label{line:r_rare_protected}

    \Comment{\parbox[t]{.5\linewidth}{At this point, every element of rare neighborhood type has an $S_1$-predecessor and an $S_1$-successor of neighborhood type $\tau_0$}}

    \For{$k$ from $1$ to $r-1$}

    \ForAll{$x\in R_k$}
    \Comment $\tp{x}{\mathcal G_1}{k}$ is a frequent neighborhood type

    \ForAll{neighbors $y\notin R_{\leq k+1}$ of $x$ in $\mathcal G_1$}
    \State $R_{k+1}\gets R_{k+1}\cup\{y\}$
    \EndFor

    \If{$x$ does not have a successor by $S_1$}
    \State{\label{line:r_k_succ}find \parbox[t]{.5\linewidth}{$x^+$ such that

        $\tp{x^+}{\mathcal G_1}{r}=\tp{x}{\mathcal G_1}{r}$ and
        
        $\dist{(\mathcal G_1,S_1)}{x^+}{R_{\leq k+1}}>2r$}}
    \State $R_{k+1}\gets R_{k+1}\cup\{x^+\}$
    \State $S_1\gets S_1\cup\{(x,x^+)\}$
    \EndIf

    \If{$x$ does not have a predecessor by $S_1$}
    \State{\label{line:r_k_pred}find \parbox[t]{.5\linewidth}{$x^-$ such that

        $\tp{x^-}{\mathcal G_1}{r}=\tp{x}{\mathcal G_1}{r}$ and
        
        $\dist{(\mathcal G_1,S_1)}{x^-}{R_{\leq k+1}}>2r$}}
    \State $R_{k+1}\gets R_{k+1}\cup\{x^-\}$
    \State $S_1\gets S_1\cup\{(x^-,x)\}$
    \EndIf

    \EndFor

    \EndFor

  \end{algorithmic}

\end{algorithm}

\subsection{Construction of \texorpdfstring{$S_1$}{S1} around the junctions between two frequent neighborhood types}
\label{subsec:junction}

Recall that there is a second kind of singular elements: those which will be at the junction between two successive frequent neighborhood types. That is, elements of neighborhood type $\tau_i$ that will, in the final structure $(\mathcal G_1,S_1)$, have an $S_1$-successor of neighborhood type $\tau_{i+1[m]}$, or an $S_1$-predecessor of neighborhood type $\tau_{i-1[m]}$.

Those junction elements need to be treated in a similar way as the occurrences of rare neighborhood types in Section~\ref{subsec:rare}. This construction is done following Algorithm~\ref{alg:p}.

\medskip

The idea of Algorithm~\ref{alg:p} is very similar to that of Algorithm~\ref{alg:r}. We start by picking two elements $x_i^\text{min}$ and $x_i^\text{max}$ for every frequent neighborhood type $\tau_i$ (for loop line~\ref{line:p_p0}), that are far from each other and from the previous construction. Once the construction is done, $x_i^\text{min}$ will be the first (wrt. $S_1$) element of the sequence of occurrences of type $\tau_i$, and $x_i^\text{max}$ the last one. The elements that will appear between (in the sense of $S_1$) $x_i^\text{min}$ and $x_i^\text{max}$ will exactly be those of neighrborhood type $\tau_i$, except possibly for $i=0$, where all the occurrences of the rare neighborhood types will also appear between $x_0^\text{min}$ and $x_0^\text{max}$.

Once these $2m$ elements are chosen, we add an $S_1$-edge between each $x_i^\text{max}$ and the corresponding $x_{i+1[m]}^\text{min}$ on line~\ref{line:p_junctions}: in the final structure, these edges will mark the transition (in the sense of $S_1$) between the range of elements of neighborhood type $\tau_i$ and those of neighborhood type $\tau_{i+1[m]}$.

The set $P_0$ of those $2m$ elements will have the same role as the set $R_0$ of occurrences of rare neighborhood types for Algorithm~\ref{alg:r}: we build $S_1$-edges at depth $r$ around it. This is done through the subsets $P_k$ of $G_1$, for $0\leq k\leq r$, $P_k$ being the set of elements at distance $k$ from $P_0$ in $(\mathcal G_1,S_1)$. Once again, $P_{\leq k}$ denotes $\bigcup\limits_{0\leq j\leq k} P_j$.

For the same reason as for Algorithm~\ref{alg:r}, it is always possible to find elements $x^+$ and $x^-$ on lines~\ref{line:p_succ} and \ref{line:p_pred}.

Note that if $m=1$, there is obviously no transition elements: we simply construct an $S_1$-edge between $x_0^\text{max}$ and $x_0^\text{min}$.

\begin{algorithm}[!ht]
  \small
  \caption{Construction of $S_1$ around the junctions between two frequent neighborhood types}
  \label{alg:p}

  \begin{algorithmic}[1]

    \State $P_0,\cdots ,P_r\gets\emptyset$
    
    \For{\label{line:p_p0}$i$ from $0$ to $m-1$}
    \State{find \parbox[t]{.6\linewidth}{$x_i^\text{min}$ such that

        $\tp{x_i^\text{min}}{\mathcal G_1}{r}=\tau_i$ and
        
        $\dist{(\mathcal G_1,S_1)}{x_i^\text{min}}{R_{\leq r}\cup P_0}>2r$}}
    \State $P_0\gets P_0\cup\{x_i^\text{min}\}$
    \State{find \parbox[t]{.6\linewidth}{$x_i^\text{max}$ such that

        $\tp{x_i^\text{max}}{\mathcal G_1}{r}=\tau_i$ and
        
        $\dist{(\mathcal G_1,S_1)}{x_i^\text{max}}{R_{\leq r}\cup P_0}>2r$}}
    \State $P_0\gets P_0\cup\{x_i^\text{max}\}$    
    \EndFor
    
    \For{$i$ from $0$ to $m-1$}
    \State\label{line:p_junctions} $S_1\gets S_1\cup\{(x_i^\text{max},x_{i+1[m]}^\text{min})\}$
    \EndFor

    \For{$k$ from $0$ to $r-1$}
    
    \ForAll{$x\in P_k$}

    \ForAll{neighbors $y\notin P_{\leq k+1}$ of $x$ in $\mathcal G_1$}
    \State $P_{k+1}\gets P_{k+1}\cup\{y\}$
    \EndFor

    \If{$x$ does not have a successor by $S_1$}
    \State{\label{line:p_succ}find \parbox[t]{.6\linewidth}{$x^+$ such that

        $\tp{x^+}{\mathcal G_1}{r}=\tp{x}{\mathcal G_1}{r}$ and 
        
        $\dist{(\mathcal G_1,S_1)}{x^+}{R_{\leq r}\cup P_{\leq k+1}}>2r$}}
    \State $P_{k+1}\gets P_{k+1}\cup\{x^+\}$
    \State $S_1\gets S_1\cup\{(x,x^+)\}$
    \EndIf

    \If{$x$ does not have a predecessor by $S_1$}
    \State{\label{line:p_pred}find \parbox[t]{.6\linewidth}{$x^-$ such that 
        
        $\tp{x^-}{\mathcal G_1}{r}=\tp{x}{\mathcal G_1}{r}$ and
        
        $\dist{(\mathcal G_1,S_1)}{x^-}{R_{\leq r}\cup P_{\leq k+1}}>2r$}}
    \State $P_{k+1}\gets P_{k+1}\cup\{x^-\}$
    \State $S_1\gets S_1\cup\{(x^-,x)\}$
    \EndIf
    \EndFor

    \EndFor
    
  \end{algorithmic}
\end{algorithm}

\subsection{Carrying \texorpdfstring{$S_1$}{S1} over to \texorpdfstring{$\mathcal G_2$}{G2}}
\label{subsec:carry_over}

In Sections~\ref{subsec:rare} and \ref{subsec:junction}, $S_1$ has been constructed around the singular points of $\mathcal G_1$, \ie occurrences of rare neighborhood types and elements that are to make the junction between two $S_1$-segments of frequent neighborhood types. 

Before we extend $S_1$ to the remaining elements (all of them being occurrences of frequent neighborhood types) of $\mathcal G_1$, we carry it over to $\mathcal G_2$. This transfer is possible under the starting hypothesis that $\mathcal G_1$ and $\mathcal G_2$ are \FO-similar.

Indeed, we made the assumption that $\mathcal G_1\foeq{f(\alpha)}\mathcal G_2$; now, provided that $f(\alpha)$ is large enough, this ensures that there exists a substructure of $\mathcal G_2$ which is isomorphic to the part of $\mathcal G_1$ around which we have already constructed $S_1$. This isomorphism will make it possible to carry this partial successor relation over to $\mathcal G_2$. We make this precise in the following.

\medskip

Let \[A_1:=R_{\leq r}\cup P_{\leq r}\] and \[B:=\{x\in G_1:\dist{(\mathcal G_1,S_1)}{x}{A_1}\leq r\}\,.\]

If we let $t_r^d$ be the number of $r$-neighborhood types of degree at most $d$ over $\Sigma$, we must have that $m\leq n$ thus $|A_1|$ can be bounded by \[(\beta+2t_r^d)N(d+2,r)\,.\]

Similarly, the size of $B$ can be bounded by \[(\beta+2t_r^d)N(d+2,2r)\,,\] which is a function of $\beta$, $r$ and $d$. Hence as long as $f(\alpha)$ is larger than that number, the Duplicator has a winning strategy in the Ehrenfeucht-Fra\"iss\'e game between $\mathcal G_1$ and $\mathcal G_2$ in which the Spoiler chooses every element of $B$. Let $h:B\rightarrow G_2$ be the function resulting from such a strategy.

%The size of $B$ can be bounded by a function of $\beta$, $r$ and $d$, hence as long as $f(\alpha)$ is large enough, the duplicator has a winning strategy in the Ehrenfeucht-Fra\"iss\'e game between $\mathcal G_1$ and $\mathcal G_2$ in which the spoiler places his pebbles on every element of $B$. Let $h:B\rightarrow G_2$ be the function resulting from such a strategy.

The mapping $h$ defines an isomorphism from $\mathcal G_1|_B$ to $\mathcal G_2|_{\im(h)}$. Let $A_2:=h(A_1)$. By making $f(\alpha)$ large enough, we can make sure that $\im(h)$ covers the $r$-neighborhood in $\mathcal G_2$ of every element of $A_2$. We then have that for every $x\in A_1$, $\tp{h(x)}{\mathcal G_2}{r}=\tp{x}{\mathcal G_1}{r}$.

We set $S_2:=\{(h(x),h(y)):(x,y)\in S_1\}$. $h$ now defines an isomorphism from $(\mathcal G_1,S_1)|_B$ to $(\mathcal G_2,S_2)|_{\im(h)}$, and for every $x\in A_1$, $\tp{h(x)}{(\mathcal G_2,S_2)}{r}=\tp{x}{(\mathcal G_1,S_1)}{r}$.

Note that since $\mathcal G_1$ and $\mathcal G_2$ have the same number of occurrences of each rare neighborhood type, every element lying outside of $A_2$ must have a frequent neighrbohood type.

\subsection{Completion of \texorpdfstring{$S_1$}{S1} and \texorpdfstring{$S_2$}{S2}}
\label{subsec:completion}

Now that $S_1$ and $S_2$ are constructed around all the singular points in $\mathcal G_1$ and $\mathcal G_2$, it remains to extend their construction to all the other elements of the structures. Recall that all the remaining elements are occurrences of frequent neighborhood types.

\medskip

From $(\mathcal G_\epsilon,S_\epsilon)$, for $\epsilon\in\{1,2\}$, at any point in the construction, let us define the partial function $S_\epsilon^*:G_\epsilon\rightarrow G_\epsilon$ that maps $x\in G_\epsilon$ to the (unique) $y$ that is $S_\epsilon$-reachable (while taking the orientation into account) from $x$ and that does not have an $S_\epsilon$-successor. This function is defined on every element that does not belong to an $S_\epsilon$-cycle (and in particular, on every element without an $S_\epsilon$-predecessor).

Likewise, we define $S_\epsilon^{-*}$ by reversing the arrows of $S_\epsilon$.

At this point, for every $x\notin A_1$, $S_1^*(x)=S_1^{-*}(x)=x$, and for every $x\notin A_2$, \[S_2^*(x)=S_2^{-*}(x)=x\,.\]

\medskip

We now run Algorithm~\ref{alg:s}. We first treat $\mathcal G_1$, and then apply a similar method to $\mathcal G_2$, replacing $x_i^\text{min}$ and $x_i^\text{max}$ by $h(x_i^\text{min})$ and $h(x_i^\text{max})$. The idea is, for every frequent neighborhood type $\tau_i$, to insert all its remaining occurrences between (in the sense of $S_1$) $x_i^\text{min}$ and $x_i^\text{max}$.

The first approach (the loop at line~\ref{line:greedy}) is greedy: while constructing $S_\epsilon$ on nodes of neighborhood type $\tau_i$, we choose as the successor of the current node any occurrence of $\tau_i$ that is at distance greater than $2r$ from the current node $s$ and the closing node of neighborhood type $\tau_i$, $S_\epsilon^{-*}(x_i^\text{max})$. This, together with Lemma~\ref{lem:layer}, ensures that \lay{r} holds after every addition. The conditions line~\ref{line:greedy_cond} also ensure that the final edge addition, line~\ref{line:close}, does not break \lay{r}.

Once we cannot apply this greedy approach anymore, we know that only a small number (which can be bounded by $2N(d+2,2r)$) of nodes of neighborhood type $\tau_i$ remain without $S_1$-predecessor. The loop at line~\ref{line:cycle} considers one such node $x$ at a time. As long as $g$ is large enough, we have constructed $S_1$ around enough elements of neighborhood type $\tau_i$ in the greedy approach to ensure the existence of some $S_1(y,z)$, with $y,z$ of neighborhood type $\tau_i$ and at distance greater than $2r$ from $x$; $x$ is inserted between $y$ and $z$ (line~\ref{line:cut}). For that, it is enough to have constructed at least \[2N(d+2,2r)+1\] $S_1$-edges in the greedy phase. This is the case in particular when there are at least \[4N(d+2,2r)+1\] elements of neighborhood type $\tau_i$ without $S_1$-predecessor at the beginning of Algorithm~\ref{alg:s}, which can be ensured by having \[g(\beta)\geq |A_1|+4N(d+2,2r)+1\,.\] This holds in particular when \[g(\beta)\geq (\beta+2t_r^d)N(d+2,r)+4N(d+2,2r)+1\,.\]

We will prove in Lemma~\ref{lem:lay} that all these insertions preserve \lay{r}.

%The first approach (the loop at line~\ref{line:greedy}) is greedy: while constructing $S_\epsilon$ on nodes of type $\tau_i$, we choose as the successor of the current node any occurrence of $\tau_i$ that is at distance greater than $2r+1$ from the current node $s$ and the closing node of type $\tau_i$, $S_\epsilon^{-*}(x_i^-)$. This, together with Lemma~\ref{lem:layer}, ensures that \lay{r} holds after every addition.

%Once we cannot apply this greedy approach anymore, we know that only a small number (bounded by some function of $d$ and $r$) of nodes of type $\tau_i$ remain without $S_1$-predecessor. The loop at line~\ref{line:cycle} considers one such node $x$ at a time. As long as $g$ is large enough, we have constructed $S_1$ around enough elements of type $\tau_i$ in the greedy approach to ensure the existence of some $S_1(y,z)$, with $y,z$ of type $\tau_i$ and at distance greater than $2r+1$ from $x$; $x$ is inserted between $y$ and $z$ (line~\ref{line:cut}).

\begin{algorithm}[!ht]
  \small
  \caption{Completion of $S_\epsilon$}
  \label{alg:s}
  
  \begin{algorithmic}[1]

    \For{$\epsilon$ from $1$ to $2$}

    \For{$i$ from $0$ to $m-1$} \label{line:greedy}

    \If{$\epsilon=1$}
    \State $s\gets S_1^*(x_i^\text{min})$ \label{line:new_s}
    \State $t\gets S_1^{-*}(x_i^\text{max})$
    \Else
    \State $s\gets S_2^*(h(x_i^\text{min}))$
    \State $t\gets S_2^{-*}(h(x_i^\text{max}))$
    \EndIf

    \While{such an $x$ exists} \label{line:greedy_while}
    \State{find \parbox[t]{.6\linewidth}{$x$ with no $S_\epsilon$-predecessor, such that \label{line:greedy_cond}
        $\tp{x}{\mathcal G_\epsilon}{r}=\tau_i$, 
  
        $\dist{(\mathcal G_\epsilon,S_\epsilon)}{s}{x}>2r$, 

        $\dist{(\mathcal G_\epsilon,S_\epsilon)}{x}{t}>2r$ and
        
        $\dist{(\mathcal G_\epsilon,S_\epsilon)}{S_\epsilon^*(x)}{t}>2r$}}
    \State $S_\epsilon\gets S_\epsilon\cup\{(s,x)\}$ \label{line:next}
    \State $s\gets S_\epsilon^*(x)$ \label{line:s<-x}
    \EndWhile

    \Comment{\parbox[t]{.5\linewidth}{At this point, only a bounded number of elements of neighborhood type $\tau_i$ are left without an $S_\epsilon$-predecessor}}
    
    \State $S_\epsilon\gets S_\epsilon\cup\{(s,t)\}$ \label{line:close}

    \EndFor
    
    \For{$i$ from $0$ to $m-1$}\label{line:cycle}

    \ForAll{$x$ without $S_\epsilon$-predecessor, s.t. $\tp{x}{\mathcal G_\epsilon}{r}=\tau_i$}
    \State{find \parbox[t]{.5\linewidth}{ $y,z\notin A_\epsilon$ such that

        $\tp{y}{\mathcal G_\epsilon}{r}=\tp{z}{\mathcal G_\epsilon}{r}=\tau_i$, 

        $(y,z)\in S_\epsilon$, 

        $\dist{(\mathcal G_\epsilon,S_\epsilon)}{y}{x}>2r$ and

        $\dist{(\mathcal G_\epsilon,S_\epsilon)}{S_\epsilon^*(x)}{z}>2r$}}
    \State $S_\epsilon\gets S_\epsilon\setminus\{(y,z)\}\cup\{(y,x),(S_\epsilon^*(x),z)\}$ \label{line:cut}
    \EndFor
    
    \EndFor    

    \EndFor
  \end{algorithmic}
\end{algorithm}

\subsection{Examples of construction}
\label{subsec:examples}

Before we give the proof of correctness of these algorithms, let us see how they apply in some simple cases

\begin{exa}
  Suppose that there are no occurrences of rare neighborhood types, and only one frequent neighborhood type $\tau_0$, and assume $r=2$.

  In this case, Algorithm~\ref{alg:r} is irrelevant, and all Algorithm~\ref{alg:p} does is pick $x_0^\text{max}$ and $x_0^\text{min}$ far from each other, and start building $S_1$ around those nodes in order to construct their complete $r$-neighborhood in $(\mathcal G_1,S_1)$. In order to make the figure more readable, let us consider that $x_0^\text{max}$ and $x_1^\text{min}$ have only one neighbor. In Figure~\ref{fig:p_1}, the plain lines represent edges in $\mathcal G_1$, and the dashed arrows represent $S_1$.

  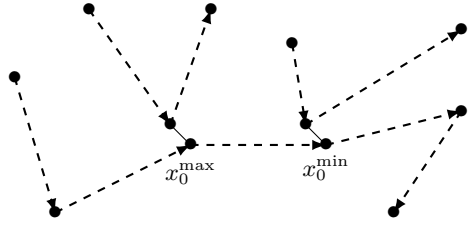
\begin{figure}[ht]
    \centering
    \begin{tikzpicture}[baseline=(current bounding box.center),scale=0.9]
        
      \coordinate (1) at (0,0);
      \coordinate (2) at (2,0);
      \coordinate (3) at (-0.3,0.3);
      \coordinate (4) at (1.7,0.3);

      \coordinate (5) at (-1.5,2);
      \coordinate (6) at (-2,-1);

      \coordinate (7) at (-2.6,1);
      \coordinate (8) at (0.3,2);

      \coordinate (9) at (4,0.5);
      \coordinate (10) at (3,-1);

      \coordinate (11) at (1.5,1.5);
      \coordinate (12) at (4,1.7);

      \draw (1) node[below=3pt] {\footnotesize $x_0^\text{max}$} node {$\bullet$};
      \draw (2) node[below] {\footnotesize $x_0^\text{min}$} node {$\bullet$};

      \foreach \node in {3,...,12}
      \draw (\node) node {$\bullet$};

      \foreach \from/\to in {1/3,2/4}
      \draw[-] (\from) to (\to);

      \foreach \from/\to in {1/2,7/6,6/1,5/3,3/8,2/9,9/10,11/4,4/12}
      \draw[->,thick,dashed,>=latex] (\from) to (\to);

    \end{tikzpicture}
    \caption{After Algorithm~\ref{alg:p}, with one frequent neighborhood type}
    \label{fig:p_1}
  \end{figure}

  We now apply Algorithm~\ref{alg:s}. The first step is to add elements between (in the sense of $S_1$) $S_1^*(x_0^\text{min})$ and $S_1^{-*}(x_0^\text{max})$ in order to join them, in a greedy fashion. Once this is done, there only remain a few elements that have not been assigned an $S_1$-predecessor. This is depicted in Figure~\ref{fig:s_1}.

  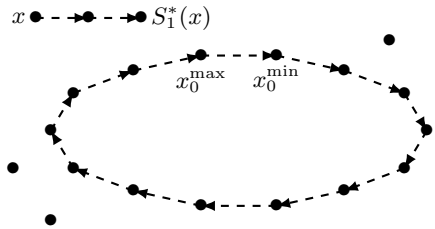
\begin{figure}[ht]
    \centering
    \begin{tikzpicture}[baseline=(current bounding box.center),scale=1]
      
      \coordinate (1) at (-0.5,1);
      \coordinate (2) at (0.5,1);
      \coordinate (3) at (1.4,0.8);
      \coordinate (4) at (2.2,0.5);
      \coordinate (5) at (2.5,0);
      \coordinate (6) at (2.2,-0.5);
      \coordinate (7) at (1.4,-0.8);
      \coordinate (8) at (0.5,-1);
      \coordinate (9) at (-0.5,-1);
      \coordinate (10) at (-1.4,-0.8);
      \coordinate (11) at (-2.2,-0.5);
      \coordinate (12) at (-2.5,0);
      \coordinate (13) at (-2.2,0.5);
      \coordinate (14) at (-1.4,0.8);

      \coordinate (15) at (2,1.2);

      \coordinate (16) at (-2.7,1.5);
      \coordinate (17) at (-2,1.5);
      \coordinate (18) at (-1.3,1.5);

      \coordinate (19) at (-2.5,-1.2);
      \coordinate (20) at (-3,-0.5);

      \draw (1) node[below=2pt] {\footnotesize $x_0^\text{max}$};
      \draw (2) node[below] {\footnotesize $x_0^\text{min}$};
      \draw (16) node[left] {\footnotesize $x$};
      \draw (18) node[right] {\footnotesize $S_1^*(x)$};

      \foreach \node in {1,...,20}
      \draw (\node) node {$\bullet$};

      \foreach \from/\to in {1/2,2/3,3/4,4/5,5/6,6/7,7/8,8/9,9/10,10/11,11/12,12/13,13/14,14/1,16/17,17/18}
      \draw[->,thick,dashed,>=latex] (\from) to (\to);

    \end{tikzpicture}
    \caption{After the greedy part of Algorithm~\ref{alg:s}, with one frequent neighborhood type}
    \label{fig:s_1}
  \end{figure}

  Now we consider one by one each of the elements that do not have an $S_1$-predecessor: let us start with $x$ in Figure~\ref{fig:s_1}. Our goal is to insert it in the $S_1$-cycle while still respecting \lay{r}. For that, we find two successive elements $y,z$ of the cycle that are far from $x$ and $S_1^*(x)$, and we insert $x$ between them, as shown in Figure~\ref{fig:s_2}.

  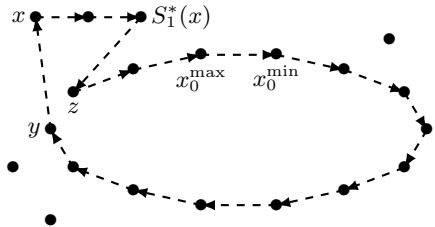
\begin{figure}[ht]
    \centering
    \begin{tikzpicture}[baseline=(current bounding box.center),scale=1]
      
      \coordinate (1) at (-0.5,1);
      \coordinate (2) at (0.5,1);
      \coordinate (3) at (1.4,0.8);
      \coordinate (4) at (2.2,0.5);
      \coordinate (5) at (2.5,0);
      \coordinate (6) at (2.2,-0.5);
      \coordinate (7) at (1.4,-0.8);
      \coordinate (8) at (0.5,-1);
      \coordinate (9) at (-0.5,-1);
      \coordinate (10) at (-1.4,-0.8);
      \coordinate (11) at (-2.2,-0.5);
      \coordinate (12) at (-2.5,0);
      \coordinate (13) at (-2.2,0.5);
      \coordinate (14) at (-1.4,0.8);

      \coordinate (15) at (2,1.2);

      \coordinate (16) at (-2.7,1.5);
      \coordinate (17) at (-2,1.5);
      \coordinate (18) at (-1.3,1.5);

      \coordinate (19) at (-2.5,-1.2);
      \coordinate (20) at (-3,-0.5);

      \draw (1) node[below=2pt] {\footnotesize $x_0^\text{max}$};
      \draw (2) node[below] {\footnotesize $x_0^\text{min}$};
      \draw (12) node[left] {\footnotesize $y$};
      \draw (13) node[below] {\footnotesize $z$};
      \draw (16) node[left] {\footnotesize $x$};
      \draw (18) node[right] {\footnotesize $S_1^*(x)$};

      \foreach \node in {1,...,20}
      \draw (\node) node {$\bullet$};

      \foreach \from/\to in {1/2,2/3,3/4,4/5,5/6,6/7,7/8,8/9,9/10,10/11,11/12,13/14,14/1,16/17,17/18,12/16,18/13}
      \draw[->,thick,dashed,>=latex] (\from) to (\to);

    \end{tikzpicture}
    \caption{Inserting $x$ in the $S_1$-cycle, as in the second part of Algorithm~\ref{alg:s}, with one frequent neighborhood type}
    \label{fig:s_2}
  \end{figure}

  We treat all the elements without an $S_1$-predecessor in the same way, until $S_1$ is fully built.

\end{exa}

\begin{exa}
  Suppose now that there are two frequent neighborhood types $\tau_0$ and $\tau_1$, and still no occurrences of rare neighborhood types.
  
  The procedure is very similar: in Algorithm~\ref{alg:p}, we build the $r$-neighborhood in $(\mathcal G_1,S_1)$ of the four nodes $x_0^\text{max}$, $x_0^\text{min}$, $x_1^\text{max}$ and $x_1^\text{min}$. 

  After the greedy part of Algorithm~\ref{alg:s}, $S_1$ looks like in Figure~\ref{fig:s_3}, where occurrences of $\tau_0$ are represented as $\bullet$ and occurrences of $\tau_1$ as $\circ$. The remaining of Algorithm~\ref{alg:s} is as unchanged.
  
  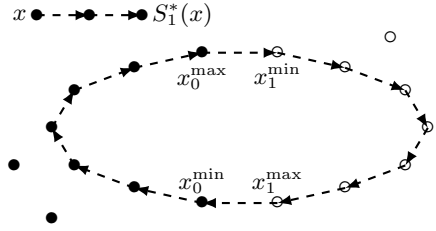
\begin{figure}[ht]
    \centering 
    \begin{tikzpicture}[baseline=(current bounding box.center)]
      
      \coordinate (1) at (-0.5,1);
      \coordinate (2) at (0.5,1);
      \coordinate (3) at (1.4,0.8);
      \coordinate (4) at (2.2,0.5);
      \coordinate (5) at (2.5,0);
      \coordinate (6) at (2.2,-0.5);
      \coordinate (7) at (1.4,-0.8);
      \coordinate (8) at (0.5,-1);
      \coordinate (9) at (-0.5,-1);
      \coordinate (10) at (-1.4,-0.8);
      \coordinate (11) at (-2.2,-0.5);
      \coordinate (12) at (-2.5,0);
      \coordinate (13) at (-2.2,0.5);
      \coordinate (14) at (-1.4,0.8);

      \coordinate (15) at (-3,-0.5);

      \coordinate (16) at (-2.7,1.5);
      \coordinate (17) at (-2,1.5);
      \coordinate (18) at (-1.3,1.5);

      \coordinate (19) at (-2.5,-1.2);
      \coordinate (20) at (2,1.2);

      \draw (1) node[below=2pt] {\footnotesize $x_0^\text{max}$};
      \draw (2) node[below] {\footnotesize $x_1^\text{min}$};
      \draw (8) node[above] {\footnotesize $x_1^\text{max}$};
      \draw (9) node[above] {\footnotesize $x_0^\text{min}$};
      \draw (16) node[left] {\footnotesize $x$};
      \draw (18) node[right] {\footnotesize $S_1^*(x)$};

      \foreach \node in {2,...,8,20}
      \draw (\node) node {$\circ$};

      \foreach \node in {9,...,19,1}
      \draw (\node) node {$\bullet$};

      \foreach \from/\to in {1/2,2/3,3/4,4/5,5/6,6/7,7/8,8/9,9/10,10/11,11/12,12/13,13/14,14/1,16/17,17/18}
      \draw[->,thick,dashed,>=latex] (\from) to (\to);

    \end{tikzpicture}
    \caption{After the greedy part of Algorithm~\ref{alg:s}, with two frequent neighborhood types}
    \label{fig:s_3}
  \end{figure}

  Note that if there existed some occurrences of rare neighborhood types, they would be embedded in the $\tau_0$ part of the $S_1$-cycle.

\end{exa}

\subsection{Properties of \texorpdfstring{$S_1$}{S1} and \texorpdfstring{$S_2$}{S2}}
\label{subsec:properties}

We are now ready to show that, after the successive run of Algorithms~\ref{alg:r}, \ref{alg:p} and \ref{alg:s},
\begin{itemize}
\item $S_1$ and $S_2$ are indeed successor relations (Lemma~\ref{lem:succ}),
\item $(\mathcal G_1,S_1)$ and $(\mathcal G_2,S_2)$ satisfy \lay{r} (Lemma~\ref{lem:lay}), and
\item a singular element (around a rare or a junction element) of $(\mathcal G_1,S_1)$ and its corresponding element via $h$ in $(\mathcal G_2,S_2)$ have the same $r$-neighborhood type (Lemma~\ref{lem:h_preserves_types}), while any other element in both structures has a regular (\ie fractal) $r$-neighborhood type (Lemma~\ref{lem:enriched_types}).
\end{itemize}
These properties will allow us to prove in Section~\ref{subsec:proof_conclusion} that $(\mathcal G_1,S_1)$ and $(\mathcal G_2,S_2)$ have the same number of occurrences of every $r$-neighborhood type, up to a threshold $t$.

\begin{lem}
  \label{lem:succ}
  $S_1$ (resp. $S_2$) is a successor relation on $G_1$ (resp. $G_2$).
\end{lem}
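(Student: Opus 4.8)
The plan is to show, in this order, the three properties that make $S_1$ a successor relation in the sense of Section~\ref{sec:prelim}: that $S_1$ is always the graph of a \emph{partial injection} (at most one $S_1$-successor and one $S_1$-predecessor per element), so that the $S_1$-graph is throughout a disjoint union of simple directed paths and cycles; that on completion every element has a predecessor, which by a counting argument upgrades $S_1$ to a genuine permutation of $G_1$; and finally that this permutation has a \emph{single orbit}. The argument for $S_2$ is identical, run through the isomorphism $h$ and the images $h(x_i^{\text{min}}),h(x_i^{\text{max}})$ of the singular elements.

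First I would verify the partial-injection invariant edge by edge. In Algorithms~\ref{alg:r} and~\ref{alg:p} an edge $(x,x^+)$ is added only when $x$ has no successor, and $x^+$ is chosen at $(\mathcal G_1,S_1)$-distance $>2r$ from the already-constructed part, within which all current $S_1$-edges lie; hence $x^+$ carries no incident $S_1$-edge, so in- and out-degrees stay $\leq 1$. Symmetrically for $(x^-,x)$, and the junction edges of line~\ref{line:p_junctions} link the freshly chosen, pairwise distant $x_i^{\text{max}}$ and $x_{i+1[m]}^{\text{min}}$. In Algorithm~\ref{alg:s} the greedy edge $(s,x)$ uses an $s=S_\epsilon^*(\cdot)$ that by definition has no successor and an $x$ with no predecessor; the closing edge of line~\ref{line:close} is analogous; and the insertion of line~\ref{line:cut}, replacing $(y,z)$ by $(y,x)$ and $(S_\epsilon^*(x),z)$, leaves $y$ with out-degree $1$ and $z$ with in-degree $1$ while giving $x$ and $S_\epsilon^*(x)$ the predecessor and successor they lacked. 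Thus the invariant holds at every step.

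Next I would prove $S_1$ is a permutation. Every $a\in G_1$ has an $r$-neighborhood type in $\mathcal G_1$ that is either rare or some frequent $\tau_i$. Rare occurrences receive a predecessor already in Algorithm~\ref{alg:r}, and any frequent occurrence still lacking one is handled in Algorithm~\ref{alg:s}: the loop of line~\ref{line:cycle} ranges over \emph{all} remaining $\tau_i$-sources and, by the lower bounds imposed on $g$, always completes the required insertion. Hence no element is an $S_1$-source: every vertex has in-degree exactly $1$. Since the number of $S_1$-edges equals the total in-degree, which is $|G_1|$, and equals the total out-degree, while each out-degree is at most $1$ over $|G_1|$ vertices, every out-degree must be exactly $1$ as well. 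With the invariant, $S_1$ is then a total injection, i.e. a permutation whose graph is a disjoint union of cycles.

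It remains to show this permutation has a single orbit, which I expect to be the main obstacle, since it requires tracking how components merge during Algorithm~\ref{alg:s} rather than a local check. After Algorithms~\ref{alg:r} and~\ref{alg:p} the $S_1$-graph is a union of paths, and the junction edge $x_i^{\text{max}}\to x_{i+1[m]}^{\text{min}}$ lies on a path $P_i$ running from the head $t_i=S_1^{-*}(x_i^{\text{max}})$ (type $\tau_i$) to the tail $s_{i+1[m]}=S_1^{*}(x_{i+1[m]}^{\text{min}})$ (type $\tau_{i+1[m]}$). Because every edge added while treating a frequent type $\tau_i$ has both endpoints of type $\tau_i$, the phases for distinct types do not interfere, and the values $s=S_1^{*}(x_i^{\text{min}})$ and $t=S_1^{-*}(x_i^{\text{max}})$ seen when $\tau_i$ is processed are still the tail of $P_{i-1[m]}$ and the head of $P_i$; thus the greedy-plus-closing phase for $\tau_i$ links the tail of $P_{i-1[m]}$ to the head of $P_i$ through a chain of $\tau_i$-occurrences. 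Viewing each $P_j$ as a super-node, these links are exactly the edges $\{P_{i-1[m]},P_i\}$ for $0\leq i<m$, forming a single $m$-cycle; since each link joins a tail to a head the orientations are consistent, so the $P_j$ coalesce into one directed cycle (for $m=1$ the junction edge is $x_0^{\text{max}}\to x_0^{\text{min}}$ and the single phase closes that path into a cycle). Finally the insertions of line~\ref{line:cut} only splice the remaining paths, namely the leftover $\tau_i$-occurrences and the rare-element chains absorbed into segment $0$, into existing edges of this cycle, preserving connectivity and yielding a single orbit covering all of $G_1$.
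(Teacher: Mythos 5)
Your proof is correct, and it reaches the same three facts the paper needs (partial injectivity throughout, totality at the end, single orbit), but the bookkeeping is organized rather differently. The paper's proof is built around one six-clause invariant anchored at the distinguished element $a=S_1^{-*}(x_{m-1}^{\text{max}})$: it tracks that the running tail $s$ always satisfies $S_1^{-*}(s)=a$, that no $S_1$-cycle exists before the final closing edge of iteration $i=m-1$, and that every eligible pair $(y,z)$ for the splicing step already traces back to $a$; the unique cycle then appears exactly at the last line~\ref{line:close} and is preserved by line~\ref{line:cut}. You instead decouple the local degree constraints (each edge addition keeps in- and out-degrees at most $1$, so the graph stays a disjoint union of paths and cycles) from the global connectivity, which you handle by contracting the paths $P_i$ headed at $S_1^{-*}(x_i^{\text{max}})$ into super-nodes and observing that the greedy-plus-closing phases link them into a single $m$-cycle. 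Your counting step (total in-degree equals total out-degree, hence out-degrees are all exactly $1$) makes explicit something the paper leaves implicit when it jumps from ``every element has a predecessor'' to ``$S_1$ is a permutation.'' The one place where you are terser than the paper is the claim that every edge $(y,z)$ eligible at line~\ref{line:cut} lies on the already-formed cycle; this is exactly what the paper's invariant~(iii) is designed to certify, and in your framework it follows because every $S_1$-edge with both endpoints outside $A_1$ is created during the greedy, closing, or splicing steps and hence sits on the cycle — worth a sentence, but not a gap. On balance your decomposition is arguably cleaner to check, while the paper's single invariant packages the same information in a form that is reused verbatim when arguing about line~\ref{line:cut}.
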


\begin{proof}
  This result is rather transparent, but a rigorous proof requires the usage of a somewhat cumbersome invariant.

  Let us focus on $\mathcal G_1$: the proof is the same for $\mathcal G_2$, replacing every $x_i^\text{min}$ and $x_i^\text{max}$ with $h(x_i^\text{min})$ and $h(x_i^\text{max})$.

  \medskip
  
  Let $a\in G_1$ be defined as $S_1^{-*}(x_{m-1}^\text{max})$ at the beginning of Algorithm~\ref{alg:s}. By construction, $\tp{a}{\mathcal G}{r}=\tau_{m-1}$ and $a$ has no $S_1$-predecessor as of now.

  We show that at any point before line~\ref{line:close} of the loop iteration $i=m-1$ of Algorithm~\ref{alg:s}, 

  \begin{enumerate}[(i)]
  \item\label{s} $S_1^{-*}(s)=a$
  \item\label{a} $S_1^{-*}(x_{m-1}^\text{max})=a$
  \item\label{yz} let $y,z\notin A_1$ such that $(y,z)\in S_1$ and \[\tp{y}{\mathcal G_1}{r}=\tp{z}{\mathcal G_1}{r}=\tau_j\] for some $j$; then $S_1^{-*}(y)=a$
  \item\label{p} for every $i$, $(x_i^\text{max},x_{i+1[m]}^\text{min})\in S_1$
  \item\label{cycle} there is no $S_1$-cycle
  \item\label{j} for every $j>i$, $\tp{S_1^{-*}(x_j^\text{max})}{\mathcal G_1}{r}=\tau_j$
  \end{enumerate}
  This is obviously satisfied at the beginning of Algorithm~\ref{alg:s}: there are not yet such $y,z$ as in~\ref{yz}, and $s=S_1^*(x_0^\text{min})$ is $S_1$-reachable from $x_{m-1}^\text{max}$ (since $(x_{m+1}^\text{max},x_0^\text{min})\in S_1$) hence \ref{s} holds.

    Line~\ref{line:new_s} preserves the invariant. Indeed, the new value of $s$ is $S_1$-reachable from its previous value (this is guaranteed by \ref{p}), which means that they have the same image through $S_1^{-*}$, namely $a$.

  Let us prove that line~\ref{line:next} preserves the invariant. \ref{s} and \ref{a} still hold since $x\neq a$: indeed, for $i<m-1$, $x$ and $a$ do not share the same neighborhood type, while for $i=m-1$, $a=t$ (because of \ref{a}) and the distance condition prohibits $x=a$. \ref{yz} still holds, as the only new possibility for such a couple $(y,z)$ is $(s,x)$, which is such that $S_1^{-*}(y)=a$ (because of \ref{s}). \ref{p} obviously holds, as does \ref{cycle}, since the only way for an $S_1$-cycle to have been created is if $x=S_1^{-*}(s)$, that is $x=a$. We have seen that this is absurd. \ref{j} is satisfied, as the only way for it to fail is for $x$ to be some $S_1^{-*}(x_j^\text{max})$, for $j>i$, which is impossible due to neighborhood type requirements.

  Now, let us move to line~\ref{line:s<-x}. Only \ref{s} needs verification, and the argument is the same as for line~\ref{line:new_s}.

  Finally, let us look at line~\ref{line:close}, for $i<m-1$. $t\neq a$ since their neighborhood types are different, hence \ref{s}, \ref{a} and \ref{cycle} still hold. \ref{yz} still holds, as the only new possibility for such a couple $(y,z)$ is $(s,t)$, which is such that $S_1^{-*}(y)=a$ because of \ref{s} (actually, $(s,t)$ does not even fit the condition, since $t\in A_1$). \ref{p} is still satisfied. \ref{j} holds, as the only way for it to fail is for $t$ to be some $S_1^{-*}(x_j^\text{max})$, for $j>i$, which is impossible due to neighborhood type requirements.

  \medskip

    We now prove that from line~\ref{line:cycle} until the end of Algorithm~\ref{alg:s}, there is exactly one $S_1$-cycle, which contains every $y,z\notin A_1$ such that $(y,z)\in S_1$ and $\tp{y}{\mathcal G_1}{r}=\tp{z}{\mathcal G_1}{r}=\tau_j$ for some $j$.

  This is true after line~\ref{line:close} of the loop iteration $m-1$, which creates the first $S_1$-cycle, as \ref{s} and \ref{a} ensure $t=a=S_1^{-*}(s)$. \ref{yz} guarantees that this newly created $S_1$-cycle contains all the couple $(y,z)$ satisfying the condition.
  
  It remains to show that line~\ref{line:cut} preserves this property: by hypothesis, $y$ and $z$ belong to the $S_1$-cycle. After line~\ref{line:cut}, there is still exactly one $S_1$-cycle, which corresponds to the previous one where the $S_1$-edge has been replaced by the $S_1$-segment $[x,S_1^*(x)]$. The only $S_1$-edges that have been added belong to the $S_1$-cycle, hence the second part of the property still holds.  
  
  \medskip
  
  In the end, every element of $G_1$ has a predecessor by $S_1$, hence $S_1$ is a permutation of~$G_1$. We have shown that it has a single orbit.
\end{proof}

\begin{lem}
  \label{lem:lay}
  \lay{r} holds in $(\mathcal G_\epsilon,S_\epsilon)$, for $\epsilon\in\{1,2\}$
\end{lem}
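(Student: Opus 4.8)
The plan is to carry \lay{r} along as an invariant through the \emph{entire} construction, processing the $S$-edge operations of Algorithms~\ref{alg:r}, \ref{alg:p} and \ref{alg:s} one at a time, in the order in which they are performed, and treating $\epsilon=2$ exactly like $\epsilon=1$ (with every $x_i^\text{min},x_i^\text{max}$ replaced by $h(x_i^\text{min}),h(x_i^\text{max})$). Since $(\mathcal G_\epsilon,\emptyset)$ trivially satisfies \lay{r}, and since deleting an $S$-edge can only destroy cycles and hence can never break \lay{r}, it suffices to show that every \emph{addition} of an $S$-edge preserves the invariant. For a single addition the right tool is Lemma~\ref{lem:layer}: it is enough to verify that the two endpoints of the newly added edge are at distance $>2r$ in the current structure $(\mathcal G_\epsilon,S_\epsilon)$.

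For most additions this distance bound is literally the condition under which the endpoints were chosen. In Algorithm~\ref{alg:r} the selected $x^+$ (resp.\ $x^-$) satisfies $\dist{(\mathcal G_1,S_1)}{x^+}{R_{\leq k+1}}>2r$ while the other endpoint $x$ lies in $R_{\leq k+1}$; likewise for the protection loop of Algorithm~\ref{alg:p} (with $R_{\le r}\cup P_{\le k+1}$), and for the greedy edge of line~\ref{line:next} in Algorithm~\ref{alg:s} (where $\dist{(\mathcal G_\epsilon,S_\epsilon)}{s}{x}>2r$ is imposed directly). The closing edge on line~\ref{line:close} is also covered, because the last iteration guarantees $\dist{(\mathcal G_\epsilon,S_\epsilon)}{S_\epsilon^*(x)}{t}>2r$ and $s$ is then reset to $S_\epsilon^*(x)$ (in the degenerate case where the greedy loop never runs, $s$ and $t$ lie in the disjoint protected balls of $x_i^\text{min}$ and $x_i^\text{max}$, which were placed $>2r$ apart in Algorithm~\ref{alg:p}). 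The junction edges of line~\ref{line:p_junctions} need one extra remark: the $2m$ points $x_i^\text{min},x_i^\text{max}$ are pairwise at distance $>2r$ by their selection, and neither endpoint $x_i^\text{max}$ nor $x_{i+1[m]}^\text{min}$ of the edge currently being added is incident to any junction edge added before it; since an $S$-edge only ever links two points that are $>2r$ apart, no chain of previously added junction edges can create a short path between these two endpoints, so they are still $>2r$ apart when their edge is added.

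The genuine obstacle is the cut on line~\ref{line:cut}, which \emph{removes} $(y,z)$ and \emph{adds two} edges $(y,x)$ and $(S_\epsilon^*(x),z)$: Lemma~\ref{lem:layer} cannot be applied blindly to the second edge, since adding the first one perturbs the graph. Here I would argue directly with the characterization of \lay{r} (no cycle of length $\le 2r+1$ through an $S$-edge), assuming for contradiction that such a bad cycle $\gamma$ exists after the cut and splitting on which of the two new edges $\gamma$ uses. If $\gamma$ uses neither, it lives in $(\mathcal G_\epsilon,S_\epsilon\setminus\{(y,z)\})\subseteq(\mathcal G_\epsilon,S_\epsilon)$ and contradicts the invariant held before the cut. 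If $\gamma$ uses exactly one, say $(y,x)$, then the remainder of $\gamma$ is a path of length $\le 2r$ from $x$ to $y$ avoiding both new edges, hence a path in $S_\epsilon$, contradicting $\dist{(\mathcal G_\epsilon,S_\epsilon)}{y}{x}>2r$; symmetrically for $(S_\epsilon^*(x),z)$. If $\gamma$ uses both, deleting the two new edges splits $\gamma$ into two arcs living in $S_\epsilon\setminus\{(y,z)\}$, giving two pairings. In the pairing with arcs $x\!\rightsquigarrow\! z$ and $S_\epsilon^*(x)\!\rightsquigarrow\! y$, appending the still-present edge $(y,z)\in S_\epsilon$ to the arc $S_\epsilon^*(x)\!\rightsquigarrow\! y$ yields a path from $S_\epsilon^*(x)$ to $z$ of length $\le 2r$ in $(\mathcal G_\epsilon,S_\epsilon)$, contradicting $\dist{(\mathcal G_\epsilon,S_\epsilon)}{S_\epsilon^*(x)}{z}>2r$. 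In the pairing with arcs $x\!\rightsquigarrow\! S_\epsilon^*(x)$ and $z\!\rightsquigarrow\! y$, appending $(y,z)$ to the arc $z\!\rightsquigarrow\! y$ produces an actual cycle through the $S$-edge $(y,z)$ in $(\mathcal G_\epsilon,S_\epsilon)$, of length $\le 2r+1$, again contradicting the prior invariant.

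The fiddliest point, which I expect to require the most care, is the very last sub-case: to call the arc $z\!\rightsquigarrow\! y$ together with $(y,z)$ a genuine cycle of length $\ge 3$, I must exclude that this arc is a single edge. A length-one arc from $z$ to $y$ would be either a Gaifman edge—impossible, because $(y,z)$ was itself installed as a far-apart $S$-edge, so $\dist{\mathcal G_\epsilon}{y}{z}>2r>1$—or an $S$-edge $(z,y)\in S_\epsilon$, which would create a second, length-two $S$-cycle and so contradict the single-$S$-cycle invariant established in the proof of Lemma~\ref{lem:succ}. Ruling out this degenerate arc is exactly the bookkeeping that makes the cut legitimate, and once it is in place all edge operations preserve \lay{r}, so the invariant holds in the final $(\mathcal G_\epsilon,S_\epsilon)$ for $\epsilon\in\{1,2\}$.
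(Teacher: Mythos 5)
Your overall strategy coincides with the paper's: maintain \lay{r} as an invariant across every $S$-edge operation, reduce each single addition to the distance hypothesis of Lemma~\ref{lem:layer}, and treat the two-edge cut of line~\ref{line:cut} separately. Your analysis of line~\ref{line:cut} is correct and in substance equivalent to the paper's (which applies Lemma~\ref{lem:layer} twice in sequence, establishing the distance condition for the second new edge by a path-decomposition argument that matches your ``pairing'' case analysis); your attention to the degenerate length-one arc $z\rightsquigarrow y$ is legitimate and correctly resolved.

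There is, however, a genuine gap at line~\ref{line:close}. You justify the closing edge $(s,t)$ by saying that the last iteration of the greedy loop guarantees $\dist{(\mathcal G_\epsilon,S_\epsilon)}{S_\epsilon^*(x)}{t}>2r$ and that $s$ is then reset to $S_\epsilon^*(x)$. But that distance condition is checked \emph{before} the edge $(s,x)$ of that iteration is inserted, and inserting $(s,x)$ can in principle shorten the distance from $S_\epsilon^*(x)$ to $t$: a new short path could run from $S_\epsilon^*(x)$ to $x$, cross the new edge backwards to $s$, and continue from $s$ to $t$. Excluding this requires knowing that $\dist{(\mathcal G_\epsilon,S_\epsilon)}{s}{t}>2r$ held at the start of that iteration --- an additional invariant that your proof never states, let alone maintains inductively. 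This is precisely what the paper's proof supplies: it shows that $\dist{(\mathcal G_\epsilon,S_\epsilon)}{s}{t}>2r$ is an invariant of the while loop at line~\ref{line:greedy_while}, by splitting a hypothetical short path from $S_\epsilon^*(x)$ to $t$ according to the direction in which it traverses $(s,x)$ and invoking, respectively, $\dist{(\mathcal G_\epsilon,S_\epsilon)}{x}{t}>2r$ from line~\ref{line:greedy_cond} and the inductively maintained bound on $\dist{(\mathcal G_\epsilon,S_\epsilon)}{s}{t}$. Your argument needs this extra invariant (together with its base case before the first iteration) to be complete; as written, the safety of line~\ref{line:close} does not follow from the selection conditions alone.
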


\begin{proof}
  This property is guaranteed by the distance conditions of the form \[\dist{(\mathcal G_\epsilon,S_\epsilon)}{.}{.}>2r\] imposed throughout Algorithms~\ref{alg:r}, \ref{alg:p} and \ref{alg:s}, and by Lemma~\ref{lem:layer}.

  \medskip
  
  One can very easily verify that \lay{r} is guaranteed by Lemma~\ref{lem:layer} to hold in $(\mathcal G_\epsilon,S_\epsilon)$ prior to the run of Algorithm~\ref{alg:s}.

  We focus on Algorithm~\ref{alg:s}, and we use Lemma~\ref{lem:layer} to prove that \lay{r} remains valid in $(\mathcal G_\epsilon,S_\epsilon)$ throughout its run. There are three edge additions we have to prove correct:

  \begin{itemize}
  \item For the edge addition of line~\ref{line:next}, this follows directly from Lemma~\ref{lem:layer}.

  \item For the edge addition of line~\ref{line:close}, we show that the invariant \[\dist{(\mathcal G_\epsilon,S_\epsilon)}{s}{t}>2r\] is satisfied at the beginning and at the end of the while at line~\ref{line:greedy_while}. This invariant, together with Lemma~\ref{lem:layer}, will be enough to conclude.

    The invariant holds before the first execution of the while loop.

    Working towards a contradiction, suppose that the invariant is broken during an execution of the loop. In the following, when mentioning a variable, we refer to its value at the beginning of the loop. There must exists a path from $S_\epsilon^*(x)$ (which is to become the new value of $s$ at the end of the loop) to $t$ in $(\mathcal G_\epsilon,S_\epsilon\cup\{(s,x)\})$ of length at most $2r$; consider a shortest one. As is cannot be valid in $(\mathcal G_\epsilon,S_\epsilon)$ by choice of $x$, it must go through the newly added edge $(s,x)$. This means that in $(\mathcal G_\epsilon,S_\epsilon)$, either there exist paths of length at most $2r$ from $S_\epsilon^*(x)$ to $s$ and from $x$ to $t$, or paths of length at most $2r$ from $S_\epsilon^*(x)$ to $x$ and from $s$ to $t$. The former is absurd considering the way $x$ was chosen, and the latter contradicts the previous invariant.

  \item Let us prove that the addition of the two $S_\epsilon$-edges of line~\ref{line:cut} does not break \lay{r}. By choice of $y$, we know that $\dist{(\mathcal G_\epsilon,S_\epsilon)}{y}{x}>2r$. \emph{A fortiori}, we must have \[\dist{(\mathcal G_\epsilon,S_\epsilon\setminus\{(y,z)\})}{y}{x}>2r\,,\] and Lemma~\ref{lem:layer} ensures that $(\mathcal G_\epsilon,S_\epsilon\setminus\{(y,z)\}\cup\{(y,x)\})$ satisfies the property \lay{r}.
    
    Now, to the second addition: let us prove that, at the beginning of line~\ref{line:cut}, \[\dist{(\mathcal G_\epsilon,S_\epsilon\setminus\{(y,z)\}\cup\{(y,x)\})}{S_\epsilon^*(x)}{z}>2r\,.\] We then conclude with Lemma~\ref{lem:layer}.
    
    Suppose it is not the case and consider a shortest path from $S_\epsilon^*(x)$ to $z$, which must be of length at most $2r$. This path cannot be valid in $(\mathcal G_\epsilon,S_\epsilon)$, thus it has to go through the new edge $(y,x)$. Since there cannot exist a path of length at most $2r$ from $S_\epsilon^*(x)$ to $y$ in $(\mathcal G_\epsilon,S_\epsilon)$ (as this would contradict $\dist{(\mathcal G_\epsilon,S_\epsilon)}{S_\epsilon^*(x)}{z}>2r$), it has to borrow the edge from $x$ to $y$.

    Then in $(\mathcal G_\epsilon,S_\epsilon\setminus\{(y,z)\})$, there is a path of length at most $2r$ from $y$ to $z$, which contradicts \lay{r} in $(\mathcal G_\epsilon,S_\epsilon)$.
    \qedhere
  \end{itemize}
\end{proof}
\noindent The following Lemma states that the only time $S_\epsilon$ joins two nodes that have different $r$-neighborhood types in $\mathcal G_\epsilon$ is when one of them is an occurrence of a rare neighborhood type (in which case its $S_\epsilon$-predecessor and $S_\epsilon$-successor are of neighborhood type $\tau_0$) or when they are the elements which make the transition between two frequent neighborhood types (that is, one is $x_i^\text{max}$ and the other is $x_{i+1[m]}^\text{min}$, for some $i<m$):

\begin{lem}
  \label{lem:succ_types}We have the following:

  \begin{itemize}
  \item $\forall x,y\in G_1$ such that $(x,y)\in S_1$ and ($x\notin R_0$ and $y\notin R_0$) and ($x\notin P_0$ or $y\notin P_0$), then $\tp{x}{\mathcal G_1}{r}=\tp{y}{\mathcal G_1}{r}$

  \item  $\forall x,y\in G_2$ such that $(x,y)\in S_2$ and ($x\notin h(R_0)$ and $y\notin h(R_0)$) and ($x\notin h(P_0)$ or $y\notin h(P_0)$), then $\tp{x}{\mathcal G_2}{r}=\tp{y}{\mathcal G_2}{r}$
  \end{itemize}
\end{lem}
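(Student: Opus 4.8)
The plan is to establish a uniform trichotomy for the successor edges and read off the lemma as a corollary. I would show that every edge $(u,v)$ present in the final relation $S_\epsilon$ falls into (at least) one of three cases: it is \textbf{rare}, meaning $u\in R_0$ or $v\in R_0$ (resp.\ $h(R_0)$); it is a \textbf{junction} edge $(x_i^\text{max},x_{i+1[m]}^\text{min})$ (resp.\ its $h$-image), whose two endpoints both lie in $P_0$ (resp.\ $h(P_0)$); or it is \textbf{regular}, meaning $\tp{u}{\mathcal G_\epsilon}{r}=\tp{v}{\mathcal G_\epsilon}{r}$. Granting this, the lemma is immediate: its hypotheses exactly forbid the rare and the junction cases, leaving only the regular one. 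As everywhere, I would argue for $\mathcal G_1$ and obtain $\mathcal G_2$ by substituting $h(x_i^\text{min}),h(x_i^\text{max})$ for $x_i^\text{min},x_i^\text{max}$.

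The trichotomy is then checked edge by edge along the construction. In Algorithm~\ref{alg:r} the edges created for $x\in R_0$ are rare by definition, while those created for $x\in R_k$ with $k\ge1$ satisfy $\tp{x^\pm}{\mathcal G_1}{r}=\tp{x}{\mathcal G_1}{r}$ by the find-conditions on lines~\ref{line:r_k_succ} and~\ref{line:r_k_pred} (and $x\notin R_0$ since the $R_k$ are disjoint), hence are regular. In Algorithm~\ref{alg:p} the junction edges of line~\ref{line:p_junctions} are the only edges with both endpoints in $P_0$ — the elements $x^\pm$ of lines~\ref{line:p_succ} and~\ref{line:p_pred} are placed in $P_{k+1}$, hence never belong to $P_0$ — while every remaining edge is regular by its find-condition. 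For Algorithm~\ref{alg:s}, the edges are $(s,x)$ and the closing $(s,t)$ in the greedy phase, and $(y,x),(S_\epsilon^*(x),z)$ at line~\ref{line:cut}; the find-conditions already give that $x,t,y,z$ have $r$-type $\tau_i$, so the entire question collapses to showing $\tp{s}{\mathcal G_1}{r}=\tau_i$ and $\tp{S_\epsilon^*(x)}{\mathcal G_1}{r}=\tau_i$.

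This last point — that the terminus maps $S_\epsilon^*$ and $S_\epsilon^{-*}$ preserve the $r$-type — is where I expect the real work to lie, since these are global functions of a relation that is still being modified. I would handle it by carrying the trichotomy itself as an invariant \emph{during} the run of Algorithm~\ref{alg:s}: assuming every current edge is rare, junction, or regular, the forward $S_\epsilon$-path from an element of type $\tau_i$ uses only regular edges (staying in type $\tau_i$) together with, when $i=0$, rare edges routing through embedded occurrences $w\in R_0$; but each such $w$ has both an $S_\epsilon$-predecessor and an $S_\epsilon$-successor of type $\tau_0$, so the path re-enters $\tau_0$, and since its terminus has no successor it cannot end at a rare element or at some $x_j^\text{max}$ (each of which has a successor) — it therefore has type $\tau_i$, and symmetrically for $S_\epsilon^{-*}$. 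This is essentially the bookkeeping already carried out in Lemma~\ref{lem:succ}, of which invariant~\ref{j} is the special case $\tp{S_\epsilon^{-*}(x_j^\text{max})}{\mathcal G_\epsilon}{r}=\tau_j$, so I would lean on that proof rather than repeat it. Finally, line~\ref{line:cut} only removes a regular edge $(y,z)$ and inserts the two regular edges $(y,x)$ and $(S_\epsilon^*(x),z)$, so the invariant survives; this yields the trichotomy for the final $S_\epsilon$ and closes the proof.
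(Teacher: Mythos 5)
Your proof is correct and follows essentially the same route as the paper's, which simply observes that the property holds after Algorithms~\ref{alg:r} and~\ref{alg:p} and that every edge added during the $i$-th loop of Algorithm~\ref{alg:s} joins two nodes of neighborhood type $\tau_i$. Your trichotomy is a repackaging of that observation, and your careful argument for why the termini $s$, $t$ and $S_\epsilon^*(x)$ retain type $\tau_i$ supplies a detail that the paper's two-sentence proof leaves implicit.
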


\begin{proof}
  The property clearly holds at the end of Algorithm~\ref{alg:r} and Algorithm~\ref{alg:p}.

  For any $i$ from $0$ to $m-1$, the only $S_1$-edges (resp. $S_2$-edges) that are added during the $i$-th loop are between two nodes of neighborhood type $\tau_i$.
\end{proof}

Recall the discussion at the beginning of Section~\ref{sec:fractal}. We now prove that, as long as an element is far from any occurrence of a rare neighborhood type and from the elements that make the transition between two frequent neighborhood types, its neighborhood type in $(\mathcal G_\epsilon,S_\epsilon)$ is the fractal of its neighborhood type in $\mathcal G_\epsilon$:

\begin{lem}
  \label{lem:enriched_types}
  For $\epsilon\in\{1,2\}$ and for every $0\leq k\leq r$ and $x\notin R_{\leq k}\cup P_{\leq k}$ (if $\epsilon=1$) or $x\notin h(R_{\leq k}\cup P_{\leq k})$ (if $\epsilon=2$), 
  \[\tp{x}{(\mathcal G_\epsilon,S_\epsilon)}{k}=\fractype{\tp{x}{\mathcal G_\epsilon}{k}}{k}\,.\]
\end{lem}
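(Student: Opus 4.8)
The plan is to prove the three fractal identities by a single induction on $k$, strengthening the statement so that it also tracks the partial views obtained by coming from an $S$-step. Concretely, I would prove simultaneously, for every $\epsilon$ and every suitable $x$,
\[\tp{x}{(\mathcal G_\epsilon,S_\epsilon)}{k}=\fractype{\tp{x}{\mathcal G_\epsilon}{k}}{k},\quad(\text{Up})_k,\quad(\text{Down})_k,\]
where $(\text{Up})_k$ (resp. $(\text{Down})_k$) asserts that the substructure of $\neigh{x}{(\mathcal G_\epsilon,S_\epsilon)}{k}$ obtained by discarding the $S$-branch hanging below $x^-$ (resp. above $x^+$) realizes $\uppertype{\tp{x}{\mathcal G_\epsilon}{k}}{k}$ (resp. $\lowertype{\tp{x}{\mathcal G_\epsilon}{k}}{k}$). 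Carrying all three at once is what lets the induction close, because the recursive definition of $\fractype{\cdot}{k}$ attaches \emph{upper} and \emph{lower} types — not full fractal types — to the elements it unfolds. The case $\epsilon=2$ is identical after replacing $R_{\leq k},P_{\leq k}$ by their $h$-images, so I would treat only $\epsilon=1$. The base case $k=0$ is immediate: each neighborhood reduces to the single vertex $x$ with $S$ empty, and $\fractype{\tau}{0}=\uppertype{\tau}{0}=\lowertype{\tau}{0}=\tau$ by definition.

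For the inductive step, fix $x\notin R_{\leq k}\cup P_{\leq k}$, write $\tau=\tp{x}{\mathcal G_1}{k}$ and $\mathcal N=\neigh{x}{\mathcal G_1}{k}$. The backbone is layering: since \lay{r} holds in $(\mathcal G_1,S_1)$ by Lemma~\ref{lem:lay} and $k\leq r$, the ball $\mathcal B:=\neigh{x}{(\mathcal G_1,S_1)}{k}$ contains no cycle through an $S$-edge. I would use this to establish the structural decomposition of $\mathcal B$: its central $\mathcal G_1$-piece is exactly $\mathcal N$ (a vertex at $\mathcal G_1$-distance $>k$ from $x$ can only be pulled into $\mathcal B$ by leaving and re-entering this piece through $S$-edges, which would close an $S$-cycle), and following any $S$-edge from a vertex $y\in\mathcal N$ at distance $d$ from $x$ leads into a region disjoint from $\mathcal N$ and from every other branch (a shared vertex would again create an $S$-cycle), captured up to radius $k-d-1$.

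With the decomposition in hand, the recursion feeds in via two observations for each $y\in\mathcal N$ at distance $d\leq k-1$ (which has both an $S_1$-successor $y^+$ and predecessor $y^-$, since $S_1$ is a total successor relation by Lemma~\ref{lem:succ}). First, $x\notin R_{\leq k}\cup P_{\leq k}$ means $\dist{(\mathcal G_1,S_1)}{x}{R_0\cup P_0}>k$, so by the triangle inequality every vertex within distance $k$ of $x$ — in particular $y,y^+,y^-$ — avoids $R_0\cup P_0$; hence Lemma~\ref{lem:succ_types} gives $\tp{y}{\mathcal G_1}{r}=\tp{y^+}{\mathcal G_1}{r}=\tp{y^-}{\mathcal G_1}{r}$, which agree at radius $k-d-1$ with the type $\chi$ of Definition~\ref{def:fractal_types}. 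Second, the same inequality yields $y^+\notin R_{\leq k-d-1}\cup P_{\leq k-d-1}$, so the hypothesis $(\text{Up})_{k-d-1}$ applies at $y^+$ and identifies its branch (read without returning to $y=(y^+)^-$) as $\uppertype{\chi}{k-d-1}$, i.e. exactly $\mathcal M^+_y$; symmetrically $(\text{Down})_{k-d-1}$ identifies the branch at $y^-$ as $\mathcal M^-_y$. Gluing $\mathcal N$ with all these branches reproduces $\mathcal N'$, and adjoining the two central branches at $x$ (treated identically, at radius $k-1$) reproduces $\mathcal N^{+/-}$, $\mathcal N^+$, or $\mathcal N^-$ according to whether we keep both, only the upper, or only the lower central branch — giving $\fractype{\tau}{k}$, $\uppertype{\tau}{k}$, $\lowertype{\tau}{k}$.

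The main obstacle is the structural decomposition step: making rigorous that $\mathcal B$ is precisely the acyclic gluing described, with branches that neither overlap nor re-enter $\mathcal N$, and that each $S$-step costs exactly one unit of radius so that the bookkeeping matches Definition~\ref{def:fractal_types} on the nose. Everything here rests on extracting from ``no $S$-cycle of length $\leq 2r+1$'' the clean claim that each $S$-edge leads to a fresh region; the delicate part is the upper/lower accounting — not reattaching the branch toward the vertex one just came from — which is exactly what the companion statements $(\text{Up})$ and $(\text{Down})$ are engineered to supply.
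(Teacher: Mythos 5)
Your proof is correct and follows essentially the same route as the paper's: an induction on $k$ that combines the layering property (Lemma~\ref{lem:lay}), the type preservation along $S$-edges (Lemma~\ref{lem:succ_types}), and the fact that elements at distance $d$ from $x$ avoid $R_{\leq k-d}\cup P_{\leq k-d}$. The only difference is that you strengthen the induction hypothesis with the explicit $(\text{Up})_k/(\text{Down})_k$ statements to account for the upper/lower types in Definition~\ref{def:fractal_types}, a bookkeeping point the paper handles implicitly by invoking the full fractal type of $x^+$ and $x^-$ at radius $k$ and letting layering discard the branch that returns through $x$.
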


\begin{proof}
  
  We prove the result by induction on $k$. For $k=0$, there is nothing to do but note that no edge $S_\epsilon(x,x)$ has been created.

  Suppose that we have proven the result for some $k<r$, and let $x\notin R_{\leq k+1}\cup P_{\leq k+1}$, or $x\notin h(R_{\leq k+1}\cup P_{\leq k+1})$.
  
  Let $y$ be such that $\dist{\mathcal G_\epsilon}{x}{y}=d$, for some $1\leq d\leq k+1$. By construction of the $R_i$ and $P_i$, and of $h$, we have that $y\notin R_{\leq k+1-d}\cup P_{\leq k+1-d}$, or $y\notin h(R_{\leq k+1-d}\cup P_{\leq k+1-d})$ (this is easily shown by induction on $d$). Hence, $\tp{y}{(\mathcal G_\epsilon,S_\epsilon)}{k+1-d}=\fractype{\tp{y}{\mathcal G_\epsilon}{k+1-d}}{k+1-d}$.

  Because Lemma~\ref{lem:lay} ensures that the $(k+1)$-neighborhood of $x$ in $(\mathcal G_\epsilon,S_\epsilon)$ is layered, it only remains to show that the $S_\epsilon$-successor $x^+$ and predecessor $x^-$ of $x$ are such that $\tp{x^+}{(\mathcal G_\epsilon,S_\epsilon)}{k}=\tp{x^-}{(\mathcal G_\epsilon,S_\epsilon)}{k}=\fractype{\tp{x}{\mathcal G_\epsilon}{k}}{k}$. Let us show this for $x^+$.
  
  Lemma~\ref{lem:succ_types} ensures $\tp{x^+}{\mathcal G_\epsilon}{r}=\tp{x}{\mathcal G_\epsilon}{r}$. It only remains to note that $x^+\notin R_{\leq k}\cup P_{\leq k}$, or $x^+\notin h(R_{\leq k}\cup P_{\leq k})$, and the induction hypothesis allows us to conclude.
\end{proof}

When we first defined $h$, it preserved $r$-neighborhood types by construction. The last step before we are able to conclude the proof of Theorem~\ref{th:collapse} is to make sure that $h$ still preserves $r$-neighborhood types, taking into account the $S_\epsilon$-edges added during the run of Algorithm~\ref{alg:s}:

\begin{lem}
  \label{lem:h_preserves_types}
  $\forall x\in A_1, \tp{h(x)}{(\mathcal G_2,S_2)}{r}=\tp{x}{(\mathcal G_1,S_1)}{r}$
\end{lem}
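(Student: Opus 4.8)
The plan is to promote $h$ from an isomorphism that matches $r$-neighborhood types \emph{before} Algorithm~\ref{alg:s} into one that still matches them \emph{after} it. Recall from Section~\ref{subsec:carry_over} that, at the moment $S_2$ is defined, $h$ is an isomorphism from $(\mathcal G_1,S_1)|_B$ onto $(\mathcal G_2,S_2)|_{\im(h)}$, and that $\tp{h(x)}{(\mathcal G_2,S_2)}{r}=\tp{x}{(\mathcal G_1,S_1)}{r}$ already holds for every $x\in A_1$. The only thing that can break this equality is the batch of $S_\epsilon$-edges added by Algorithm~\ref{alg:s}, and these are genuinely delicate: the greedy and cut phases choose their far-away endpoints in $\mathcal G_1$ and in $\mathcal G_2$ with no coordination, so $h$ is no longer a global isomorphism. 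I would therefore build, for each $x\in A_1$, an isomorphism $\neigh{x}{(\mathcal G_1,S_1)}{r}\to\neigh{h(x)}{(\mathcal G_2,S_2)}{r}$ by hand.

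The first step is to locate the new edges that actually reach inside the $r$-ball of $x$. By \lay{r} (Lemma~\ref{lem:lay}) that ball is layered, so any edge $(p,q)$ added during Algorithm~\ref{alg:s} with both endpoints within distance $r$ of $x$ must be \emph{radial}: every shortest path from $x$ to one endpoint, say $q$, passes through the other, since otherwise concatenating two independent shortest paths with the edge $(p,q)$ would yield a cycle of length $\le 2r+1$ through an $S$-edge, contradicting the characterisation of \lay{r}. Consequently the ball decomposes into a \emph{core}, made of the vertices and edges already present before Algorithm~\ref{alg:s} (all of which lie in $B$, where $h$ is an isomorphism), together with \emph{appendages} hanging radially off core vertices through single new edges. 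The attachment points lie in $B$ and so are controlled by $h$; in particular the tips $S_1^*(x_i^\text{min})$ and $S_1^{-*}(x_i^\text{max})$, which by layering sit at distance exactly $r$ from $x_i^\text{min},x_i^\text{max}$, map under $h$ to $S_2^*(h(x_i^\text{min}))$ and $S_2^{-*}(h(x_i^\text{max}))$.

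It then remains to match the appendages. An appendage vertex lies outside $B$, hence outside $R_{\le k}\cup P_{\le k}$ for the radius $k$ at which it is seen, so Lemma~\ref{lem:enriched_types} identifies the visible part of its neighborhood in $(\mathcal G_\epsilon,S_\epsilon)$ as the fractal type $\fractype{\cdot}{k}$ of its $\mathcal G_\epsilon$-type; moreover Lemma~\ref{lem:succ_types} forces the new edge to join two vertices of one and the same frequent type $\tau_i$. Since the fractal type attached to $\tau_i$ at radius $k$ does not depend on $\epsilon$, the appendage grown at a core vertex $p$ in $(\mathcal G_1,S_1)$ is isomorphic to the one grown at $h(p)$ in $(\mathcal G_2,S_2)$ — here one uses that the property of lacking an $S_\epsilon$-successor or $S_\epsilon$-predecessor is preserved by $h$, so that the completion pattern of Algorithm~\ref{alg:s} is mirrored on the two sides. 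Gluing $h$ on the core to these fractal isomorphisms on the appendages yields the desired isomorphism of $r$-balls, and hence the equality of types.

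I expect the bookkeeping of this last step to be the real obstacle: one must check that \emph{every} vertex of the ball is either a core vertex governed by $h$ or sits in a fractal appendage to which Lemma~\ref{lem:enriched_types} genuinely applies, and that appendages stacked on appendages (edges added on top of previously added ones) are matched consistently across $\mathcal G_1$ and $\mathcal G_2$. Throughout, \lay{r} is exactly what prevents an appendage from folding back into the core or two appendages from merging, which is precisely what would destroy the clean core/appendage decomposition on which the whole argument rests.
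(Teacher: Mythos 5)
Your overall route is the one the paper takes: the paper proves the lemma by induction on the radius $k$, which is precisely the recursive form of your core/appendage decomposition --- the core is handled by the isomorphism $h$ on $B$ (via the induction hypothesis), the appendages by Lemma~\ref{lem:succ_types} together with Lemma~\ref{lem:enriched_types}, and \lay{r} is what makes the radial gluing tree-like. So the architecture matches.

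There is, however, one concretely false step: the claim that ``an appendage vertex lies outside $B$''. Edges added by Algorithm~\ref{alg:s} can, and typically do, have an endpoint inside $A_1\subseteq B$: after Algorithms~\ref{alg:r} and~\ref{alg:p}, the elements of $R_r\cup P_r$ may still lack an $S_1$-successor or an $S_1$-predecessor, so the greedy phase must absorb them. For instance, the first greedy edge for $\tau_i$ emanates from $S_1^*(x_i^\text{min})\in P_r$, and the chains built around rare elements are attached to the $\tau_0$-segment through their heads and tails, which lie in $R_r$. Hence an appendage seen from some $x\in A_1$ can perfectly well be rooted at a vertex of $A_1$. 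Your argument survives because what Lemma~\ref{lem:enriched_types} actually requires is not ``outside $B$'' but ``outside $R_{\leq k}\cup P_{\leq k}$ for the radius $k$ at which the root is seen'', and this does hold: an endpoint of a new edge is never in $R_{\leq r-1}\cup P_{\leq r-1}$, since every element there already has both an $S_1$-successor and an $S_1$-predecessor at the end of Algorithm~\ref{alg:p} and is never touched by Algorithm~\ref{alg:s} (the cut phase only detaches edges between vertices outside $A_\epsilon$); and an appendage root is always seen at radius at most $r-1$. Once ``outside $B$'' is replaced by this observation (applied on both sides, for $q$ and for its counterpart attached at $h(p)$), the decomposition closes the bookkeeping you were worried about: Lemma~\ref{lem:enriched_types} applied at the appendage root already describes the entire appendage, stacked sub-appendages included, so no further case analysis is needed there.
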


\begin{proof}
  We prove by induction on $0\leq k\leq r$ that $\forall x\in A_1, \tp{h(x)}{(\mathcal G_2,S_2)}{k}=\tp{x}{(\mathcal G_1,S_1)}{k}$

  There is nothing to prove for $k=0$.

  Moving from $k$ to $k+1$, let $x\in A_1$ and let $y$ be such that $\dist{\mathcal G_1}{x}{y}=d$, for some $1\leq d\leq k+1$. Note that $y\in B$, hence it has an image by $h$.

  If $y\in A_1$, the induction hypothesis allows us to conclude that \[\tp{h(y)}{(\mathcal G_2,S_2)}{k+1-d}=\tp{y}{(\mathcal G_1,S_1)}{k+1-d}\,.\]

  Else, Lemma~\ref{lem:enriched_types} ensures that: \[\tp{h(y)}{(\mathcal G_2,S_2)}{r}=\fractype{\tp{h(y)}{\mathcal G_2}{r}}{r}=\fractype{\tp{y}{\mathcal G_1}{r}}{r}=\tp{y}{(\mathcal G_1,S_1)}{r}\,.\]
  In both cases, $\tp{h(y)}{(\mathcal G_2,S_2)}{k+1-d}=\tp{y}{(\mathcal G_1,S_1)}{k+1-d}$.

  Because of \lay{r}, it only remains to show that the $S_\epsilon$-successors of $x$ and $h(x)$, as well as their $S_\epsilon$-predecessors, have the same $k$-neighborhood type in $(\mathcal G_\epsilon,S_\epsilon)$. Let us prove this for the successors, respectively named $x^+$ and $h(x)^+$.

  If $x^+\in A_1$, then by construction $h(x)^+=h(x^+)$, and the induction hypothesis allows us to conclude.
  
  Otherwise, $x^+\notin A_1$ and $h(x)^+\notin A_2$. Under this hypothesis, Lemma~\ref{lem:succ_types} ensures that \[\tp{h(x)^+}{\mathcal G_2}{r}=\tp{h(x)}{\mathcal G_2}{r}=\tp{x}{\mathcal G_1}{r}=\tp{x^+}{\mathcal G_1}{r}\,.\] 
  Now, Lemma~\ref{lem:enriched_types} ensures that

  {\renewcommand{\arraystretch}{1.4}
    \begin{tabular}{rcl}
      $\tp{h(x)^+}{(\mathcal G_2,S_2)}{r}$&$=$&$\fractype{\tp{h(x)^+}{\mathcal G_2}{r}}{r}$\\
                                          &$=$&$\fractype{\tp{x^+}{\mathcal G_1}{r}}{r}$\\
                                          &$=$&$\tp{x^+}{(\mathcal G_1,S_1)}{r}\,.$\\
    \end{tabular}
  }
  
  \qedhere
\end{proof}

\subsection{Conclusion of the proof}
\label{subsec:proof_conclusion}

We recall Theorem~\ref{th:collapse}, whose proof we are now able to conclude:

\collapse* 

Let $\alpha\in\N$. We want to prove that there exists some $f(\alpha)\in\N$ such that for any $\Sigma$-structures $\mathcal G_1,\mathcal G_2$ of degree at most $d$,
\begin{equation}
  \label{eq:goal}
  \mathcal G_1\foeq{f(\alpha)}\mathcal G_2\quad\text{entails}\quad\mathcal G_1\sieq{\alpha}\mathcal G_2\,.
\end{equation}

Indeed, this means that on the class of $\Sigma$-structures of degree at most $d$, any equivalence class $\mathcal C$ for $\sieq{\alpha}$ is a finite union of equivalence classes for $\foeq{f(\alpha)}$, and is consequently definable by an \FO-sentence $\varphi_{\mathcal C}$ of quantifier rank $f(\alpha)$. Let now $\mathcal P$ be a property of structures of degree at most $d$ definable by a sentence of \sifo of quantifier rank at most $\alpha$. It is a finite union $\bigcup_i\mathcal C_i$ of equivalence classes for $\sieq{\alpha}$. Hence, the $\FO[f(\alpha)]$-sentence $\bigvee_i\varphi_{\mathcal C_i}$ defines $\mathcal P$.

This proves the inclusion $\sifo\subseteq\FO$ on structures of degree at most $d$.

\bigskip

In order to prove (\ref{eq:goal}), we need $f(\alpha)$ to be large enough so as to enable us, given  \[\mathcal G_1\foeq{f(\alpha)}\mathcal G_2\,,\] to construct two successor relations $S_1$ and $S_2$ such that
\begin{equation}
  \label{eq:equiv_succ}
  (\mathcal G_1,S_1)\foeq{\alpha}(\mathcal G_2,S_2)\,,
\end{equation}
which in turn ensures that \[\quad\mathcal G_1\sieq{\alpha}\mathcal G_2\,.\]

\bigskip

Now, the Hanf threshold theorem yields two integers $r$ and $t$, depending on $\alpha$ and $d$, such that
\begin{equation}
  \label{eq:hanf}
  \threq{(\mathcal G_1, S_1)}{(\mathcal G_2, S_2)}{r}{t}
\end{equation}
entails (\ref{eq:equiv_succ}).

We have seen throughout this section how to construct two successor relations $S_1$ and $S_2$. All that remains is for us to show that, for the right value of $g$ in Lemma~\ref{lem:rare_freq}, our construction guarantees (\ref{eq:hanf}).

Let $\tau$ be an $r$-neighborhood type over $\Sigma\cup\{S\}$ which occurs in $(\mathcal G_1,S_1)$. There are two cases to consider:

\begin{itemize}
\item if $\tau$ occurs outside of $A_1$, then Lemma~\ref{lem:enriched_types} ensures that $\tau=\fractype{\chi}{r}$ for some frequent $r$-neighborhood type $\chi$. We can choose $g$ so that $\chi$ is guaranteed to have at least $t$ occurrences in $\mathcal G_1$ outside of $A_1$, and in $\mathcal G_2$ outside of $A_2$. This is ensured as long as \[g(\beta)\geq |A_1|+t\,,\] and in particular when \[g(\beta)\geq (\beta+2t_r^d)N(d+2,r)+t\,.\]
  
  Lemma~\ref{lem:enriched_types} then ensures that $\tau$ occurs at least $t$ times both in $(\mathcal G_1,S_1)$ and in $(\mathcal G_2,S_2)$.
  
\item if $\tau$ occurs only in $A_1$, then it cannot occur in $(\mathcal G_2,S_2)$ outside of $A_2$ (for the same reasons as above).  
  Lemma~\ref{lem:h_preserves_types} guarantees that $\tau$ has the same number of occurrences in $A_1$ and in $A_2$, hence in $(\mathcal G_1,S_1)$ and in $(\mathcal G_2,S_2)$.
\end{itemize}

As long as $g$ satisfies these conditions, (\ref{eq:hanf}) holds. We are now able to fix the value of $f(\alpha)$ as prescribed by Lemma~\ref{lem:rare_freq} for this $g$. This concludes the proof of Theorem~\ref{th:collapse}.

\section{Conclusion}

We have shown that \sifo collapses to \FO on any class of bounded degree.
%, as well as on classes of graphs which are \FO-interpretable in graph classes of bounded degree, namely near-uniform graph classes as defined in~\cite{DBLP:journals/corr/abs-1805-01823}.
In other words, we have shown that there exists a translation from \sifo to \FO on classes of bounded degree. As given by our proof, the quantifier rank of the translated sentence is triple-exponential in the quantifier rank of the original formula. It is an easy exercise to prove that the blowup is at least exponential, but we do not know if an exponential translation is at all possible.

Similar considerations arise when we take into account the length of the sentences instead of their quantifier rank - in this regard, our construction is even non-elementary, and all we know is that the blowup is at least exponential. 

An interesting task would be (i) to give an effective translation and (ii) to improve the succinctness of this translation, or to give tighter lower bounds on such translations. 

\medskip

Apart from these considerations, there are two main directions in which one could look to extend the present result.
One possibility would be to keep looking at classes of bounded degree while climbing up in the ladder of expressivity, and ask whether \oifo collapses to \FO as well on these classes of structures. 
New techniques would be needed, as contrary to what was the case with a successor, the addition of an order does not preserve the bounded degree property. Furthermore, even if $\oifo=\FO$ in this setting, it is not clear whether such orders can be directly constructed. It may be necessary to construct, as in \cite{DBLP:journals/jsyml/BenediktS09}, a chain of intermediate structures and orders.

\medskip

Alternatively, we could change the setting, and study the expressivity of \sifo on other sparse classes of structures, \eg, on classes of bounded treewidth. 
If showing the collapse of \sifo to \FO on these classes proved itself to be out of reach, a possibility would be to aim at proving that \sifo is Hanf-local (which would be stronger than the known Gaifman-locality). In that case, the starting hypothesis on the structures $\mathcal G_1$ and $\mathcal G_2$ would be stronger, as the existence of a $k$-neighborhood type-preserving bijection between the two structures would be assumed.

These tasks are much harder without any bound on the degree, which was what guaranteed that we could find elements of a given frequent neighborhood type far from each other.

\section*{Acknowledgment}
\noindent The author wishes to thank Luc Segoufin as well as the anonymous referees for their most appreciated feedback.

\bibliographystyle{alpha}
\bibliography{biblio} 

\newcommand{\etalchar}[1]{$^{#1}$}
\begin{thebibliography}{vdHKP{\etalchar{+}}17}

\bibitem[BS09]{DBLP:journals/jsyml/BenediktS09}
Michael Benedikt and Luc Segoufin.
\newblock Towards a characterization of order-invariant queries over tame
  graphs.
\newblock {\em J. Symb. Log.}, 2009.

\bibitem[EEH14]{DBLP:conf/mfcs/EickmeyerEH14}
Kord Eickmeyer, Michael Elberfeld, and Frederik Harwath.
\newblock Expressivity and succinctness of order-invariant logics on
  depth-bounded structures.
\newblock In {\em Mathematical Foundations of Computer Science (MFCS) - 39th
  International Symposium}, 2014.

\bibitem[EFG16]{DBLP:conf/lics/ElberfeldFG16}
Michael Elberfeld, Marlin Frickenschmidt, and Martin Grohe.
\newblock Order invariance on decomposable structures.
\newblock In {\em Proceedings of the 31st Annual {ACM/IEEE} Symposium on Logic
  in Computer Science, {LICS}}, 2016.

\bibitem[FSV95]{DBLP:journals/iandc/FaginSV95}
Ronald Fagin, Larry~J. Stockmeyer, and Moshe~Y. Vardi.
\newblock On monadic {NP} vs. monadic co-np.
\newblock {\em Inf. Comput.}, 1995.

\bibitem[GKS17]{DBLP:journals/jacm/GroheKS17}
Martin Grohe, Stephan Kreutzer, and Sebastian Siebertz.
\newblock Deciding first-order properties of nowhere dense graphs.
\newblock {\em J. {ACM}}, 2017.

\bibitem[GS00]{DBLP:journals/tocl/GroheS00}
Martin Grohe and Thomas Schwentick.
\newblock Locality of order-invariant first-order formulas.
\newblock {\em {ACM} Trans. Comput. Log.}, 2000.

\bibitem[GS20]{grange_et_al:LIPIcs:2020:11666}
Julien Grange and Luc Segoufin.
\newblock {Order-Invariant First-Order Logic over Hollow Trees}.
\newblock In {\em 28th EACSL Annual Conference on Computer Science Logic (CSL
  2020)}. Schloss Dagstuhl, 2020.

\bibitem[Lib04]{DBLP:books/sp/Libkin04}
Leonid Libkin.
\newblock {\em Elements of Finite Model Theory}.
\newblock Texts in Theoretical Computer Science. An {EATCS} Series. Springer,
  2004.

\bibitem[Ros07]{DBLP:journals/jsyml/Rossman07}
Benjamin Rossman.
\newblock Successor-invariant first-order logic on finite structures.
\newblock {\em J. Symb. Log.}, 2007.

\bibitem[See96]{DBLP:journals/mscs/Seese96}
Detlef Seese.
\newblock Linear time computable problems and first-order descriptions.
\newblock {\em Math. Struct. Comput. Sci.}, 1996.

\bibitem[Tra50]{trakhtenbrot1950impossibility}
Boris~A Trakhtenbrot.
\newblock Impossibility of an algorithm for the decision problem in finite
  classes.
\newblock {\em Doklady Akademii Nauk SSSR}, 1950.

\bibitem[vdHKP{\etalchar{+}}17]{DBLP:conf/lics/HeuvelKPQRS17}
Jan van~den Heuvel, Stephan Kreutzer, Michal Pilipczuk, Daniel~A. Quiroz, Roman
  Rabinovich, and Sebastian Siebertz.
\newblock Model-checking for successor-invariant first-order formulas on graph
  classes of bounded expansion.
\newblock In {\em {LICS}}, 2017.

\end{thebibliography}

\end{document}